\newcommand{\diff}{\ \mathrm{d}}
\newcommand{\ignore}[1]{}
\newtheorem{theorem}{Theorem}
\newtheorem{claim}[theorem]{Claim}
\newtheorem{lemma}[theorem]{Lemma}
\newtheorem{proposition}[theorem]{Proposition}
\theoremstyle{definition}
\newtheorem{definition}[theorem]{Definition}
\newtheorem{example}[theorem]{Example}
\newtheorem{remark}[theorem]{Remark}
\numberwithin{equation}{section}
\numberwithin{theorem}{section}
\begin{document}

\title{Existence of equilibria in countable games: \\
an algebraic approach}

\author{Valerio Capraro%
\thanks{Supported by Swiss SNF Sinergia project CRSI22-130435 }\\
   Institut de Math\'ematiques\\
   Universit\'e de Neuch\^atel\\
   Rue Emile-Argand 11 \\
   CH-2000 Neuch\^atel \\
   Switzerland \\
   \texttt{valerio.capraro@unine.ch}
   \and
   Marco Scarsini\\
   Dipartimento di Economia e Finanza\\
   LUISS\\
   Viale Romania 12\\
   I--00197 Roma, Italy\\
   \texttt{marco.scarsini@luiss.it}
   }

\date{\today}

\maketitle

\thispagestyle{empty}

\begin{abstract}

Although mixed extensions of finite games always admit equilibria, this is not the case for countable games, the best-known example being Wald's pick-the-larger-integer game. Several authors have provided conditions for the existence of equilibria in infinite games. These conditions are typically of topological nature and are rarely applicable to countable games. Here we establish an existence result for the equilibrium of countable games when the strategy sets are a countable group and the payoffs are functions of the group operation. In order to obtain the existence of equilibria, finitely additive mixed strategies have to be allowed. This creates a problem of selection of a product measure of  mixed strategies. We propose a family of such selections and prove existence of  an equilibrium that  does not depend on the selection. As a byproduct we show that if finitely additive mixed strategies are allowed, then Wald's game admits an equilibrium. We also prove existence of equilibria for nontrivial extensions of matching-pennies and rock-scissors-paper. Finally we extend the main results to uncountable games.

\bigskip
\noindent  \emph{Keywords and phrases}: Amenable groups, infinite games, existence of equilibria, invariant means, Wald's game.

\bigskip
\noindent \emph{MSC 2000 subject classification}: Primary  91A06, 91A10;
secondary 43A07.

\bigskip
\noindent \emph{JEL classification}: C72 - Noncooperative Games.

\end{abstract}

\thispagestyle{empty}

\section{Introduction}

In his celebrated theorem 
\citet{Nas:PNASUSA1950, Nas:AM1951} used fixed point theorems to prove that any finite game admits an equilibrium in mixed strategies. The result fails in general if the strategy sets are not finite. Several authors have provided conditions under which even infinite games admit an equilibrium. Among them \citet{Deb:PNASUSA1952}, who assumed convexity and compactness of the strategy sets and continuity and quasi-concavity of the payoffs, and
\citet{Gli:PAMS1952}, who assumed compactness of the strategy sets and continuity of the payoff functions. The same year \citet{Fan:PNASUSA1952} extended Kakutani's fixed point theorem, this way permitting a generalization of Nash's existence theorem like the one in \citet{Gli:PAMS1952}.

One stream of literature considered existence theorems under various conditions that allow discontinuous payoff functions: see, e.g. 
\citet{DasMas:RES1986,
Sim:RES1987, 
SimZam:E1990,
Ren:E1999, 
Car:IJGT2005,
Car:GEB2010,
BarSoz:Rochestermimeo2010,
BarGovWil:mimeo2012,
BicLar:mimeo2012},
papers in 
\citet{Car:ET2011a}, and references therein.

\citet{Wal:AM1945} considered the case where the strategy set of either one or both players is countable and showed that the mixed extension of a game has a value if one of the strategy sets is finite, but in general it doesn't if they are both countable.

One way to overcome the lack of equilibria in some games is to enlarge the set of mixed strategy by including also finitely additive probability measures.
For the probabilistic and decision-theoretical foundations of the use of finitely additive probability measures, we refer the reader to
\citet{deF:EINAUDI1970, DeF:Wiley1972, deF:Springer2008}, and
\citet{Sav:Dover1972}.
\citet{DubSav:Dover1976} used finitely additive measures extensively in their approach to gambling.

The main issue along this road is that in general a mixed extension is not well defined. Given two measures $\mu_{1}, \mu_{2}$ on the power sets of $S_{1}$ and $S_{2}$, respectively, a product measure $\mu_{1}\otimes\mu_{2}$  is uniquely defined only on the algebra generated by the cylinders and can be extended in a non-unique way to the power set of $S_{1} \times S_{2}$. As a consequence, Fubini's theorem cannot be applied to this situation and in general the order of integration of a double integral matters.

To obviate this drawback several solutions were proposed in the framework of zero-sum two-person games. Both
\citet{Yan:TPA1970} and
\citet{Kin:JOTA1983}
defined the expected value of the payoff to be an arbitrary fixed value, whenever Fubini's theorem cannot be used, and proved this way the existence of a value for the game.
\citet{HeaSud:AMS1972} instead proved the existence of a value by selecting  the product measure that corresponds to a fixed order of integration.
\citet{SheSei:BASE1996} used an approach, whose generalization we follow in our paper, that  selects as product measure a convex combination of the measures obtained by interchanging the order of integration. In order to obtain the existence of a value they need the condition that, in their words, there exist some maxmin $\mu_{1}$-strategy where each good $\mu_{2}$-reply to $\mu_{1}$ is close to one of some finite collections of $\mu_{2}$.

Finitely additive mixed strategies have been used  by
\citet{MaiSud:IJGT1993, MaiSud:IJGT1998}  in the framework of zero-sum stochastic games.
\citet{Cot:JET1991} considered finitely additive strategies in correlated equilibria; \citet{Sti:JET2011a} showed the limitation of his approach and proposed an alternative one.

\citet{Mar:IJGT1997} proved the existence of Nash equilibria in finitely additive mixed strategies under purely measure-theoretic conditions and connected this to the existence of $\varepsilon$-equilibria with countably additive mixed strategies. His analysis needs to restrict attention to payoff functions that are measurable with respect to the algebra generated by the cylinders. 

\citet{HarStiZam:GEB2005} dealt with a class of games called nearly compact and continuous  and proved existence of equilibria for any game in this class via a continuous compact imbedding in a larger game. They showed the use and limitations of finitely additive mixed strategies for these games.
 
\citet{Sti:GEB2005} devoted his attention to games that are not nearly compact and continuous and considered several classes of equilibria for these games showing advantages of disadvantages for each of them. The use of finitely additive strategies is fundamental in his analysis.

\citet{MyeRen:mimeo2012} have recently proposed a new notion of equilibrium for infinite games using finitely additive mixed strategies that arise from suitable finite approximations. 

\citet{CapMor:IJGT2012} proved existence of equilibria for a family of zero-sum two-person games on semigroups when feasible mixed strategies are restricted to a suitable subclass of the class of  finitely additive probability measures.

In this paper we prove an existence result for Nash equilibria of countable games  by imposing some algebraic conditions on the payoff functions. The strategy set of each player is assumed to be a countable group and the payoff functions depend on their arguments only through the group operation.  No topological condition is required. We allow finitely additive mixed strategies defined on the power set of the group. As mentioned before, this requires some care since the product of finitely additive measures is not uniquely defined on the power set of the Cartesian product of the groups, but only on the algebra generated by the cylinders. Since we want to integrate payoff functions that are not measurable with respect to this algebra, we need to select a suitable extension of the product measure. We propose a natural class of extensions by considering an average over all possible orders of integration. We show that the equilibrium exists and it does not depend on the way we choose this average. We characterize the equilibrium strategies and prove that they are the invariant means over the group that solve a suitable variational problem. The equilibrium payoffs have a very simple form. To avoid drowning our result in a sea of measure-theoretic technicalities, we first  develop the theory for games on countable groups; then we show how it can be extended to uncountable groups, under suitable assumptions. 

The paper is organized as follows.
Section~\ref{se:Main} describes the model and states the main result.
Section~\ref{se:Variational} proves a variational principle that is of interest \emph{per se} and is used in the proof of the main result.
Section~\ref{se:Proof} contains the proof of the main result.
Section~\ref{se:Generalization} examines some interesting generalizations.
Section~\ref{se:Examples} considers several examples.
Section~\ref{se:Uncountable} deals with uncountable groups.
Section~\ref{se:Conclusions} provides some conclusive remarks.

\section{Group games}\label{se:Main}

\subsection{Finite games}

Consider the classical matching pennies game
\begin{equation}\label{eq:matchingpennies}
\begin{array}[c]{c|rr|rr|}
\multicolumn{1}{c}{} & \multicolumn{2}{c}{A} &
\multicolumn{2}{c}{B} \\
\cline{2-5}
A & -1,& 1 & 1,& -1  \\
\cline{2-5}
B & 1,& -1 & -1,& 1  \\
\cline{2-5}
\end{array}
\end{equation}
We know that the unique equilibrium of this game is the profile of mixed strategies $((1/2,1/2), (1/2,1/2))$. 

Notice that the matching pennies game can be re-written as follows. Make the set $\{A,B\}$  a finite group\footnote{A set $G$ with a binary operation $*$ is called a \emph{group} if the operation is associative, it has a unit element, and every element has an inverse. If the operation is commutative, then the group is called \emph{abelian}.}
by endowing it with the binary operation $*$ defined as
\[
A*A=B*B= B, \quad A*B=B*A = A.
\]
Define $\phi : \{A,B\} \to \mathbb{R}$ as follows:
\[
\phi(x) = 
\begin{cases}
1 & \text{for $x=A$}, \\
-1 & \text{for $x=B$}.
\end{cases}
\]
Consider a game played by players $1$ and $2$, where each player's pure strategy set is $\{A,B\}$ and the payoffs are 
\[
u_{1}(x,y) = - u_{2}(x,y) = \phi(x*y), \quad \text{for $x,y \in \{A,B\}$}.
\]
The game that we just described is nothing else than the matching pennies game defined in \eqref{eq:matchingpennies}.

This suggests the following generalization. 
Consider a finite group $(G, *)$ and $N$ functions $\phi_{1}, \dots, \phi_{N} : G \to \mathbb{R}$. 
Given a set of players $P=\{1, \dots, N\}$, for $i\in P$ let $u_{i} : G^{N} \to \mathbb{R}$ be defined as 
\begin{equation}\label{eq:uphifinite}
u_{i}(x_{1}, \dots, x_{N}) = \phi_{i}(x_{1} * \dots * x_{N}).
\end{equation}
For $\boldsymbol{\phi} := (\phi_{1}, \dots, \phi_{N})$, call $\mathcal{G}(P,G, \boldsymbol{\phi})$ the game where the set of players is $P$, each player's set of pure strategies is $G$, and player $i$'s payoff function is given by \eqref{eq:uphifinite}. 
Call $\mathcal{P}(G)$ the set of all probability measures on $2^{G}$. A probability measure  $\lambda \in \mathcal{P}(G)$ is \emph{invariant} if for all  $x, y \in G$ we have $\lambda(x) = \lambda(x*y)$. 
Observe that finite groups have a unique invariant
measure, that is, the uniform measure. We will see in the next sections that a countable
group may have many invariant measures.
\begin{proposition}\label{pr:mainfinite}
The game  $\mathcal{G}(P,G, \boldsymbol{\phi})$ admits an equilibrium in mixed strategies $(\lambda, \dots, \lambda)$, with $\lambda$ invariant on $G$.
\end{proposition}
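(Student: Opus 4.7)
The plan is to exploit the fact that on a finite group the invariance condition forces $\lambda$ to be the uniform measure, and that convolution with the uniform measure is absorbing: the distribution of $X_1 * \dots * X_N$ is uniform as soon as any one of the $X_j$'s is uniform and independent of the rest. From there, every strategy of player $i$ yields the same expected payoff against $\lambda^{\otimes (N-1)}$, so $\lambda$ is trivially a best reply.

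First I would note that if $\lambda \in \mathcal{P}(G)$ satisfies $\lambda(\{x\}) = \lambda(\{x * y\})$ for all $x, y \in G$, then taking $x$ equal to the identity $e$ gives $\lambda(\{y\}) = \lambda(\{e\})$ for every $y \in G$; hence $\lambda$ must be the uniform measure on $G$, and in particular it exists and is unique.

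Next I would fix a player $i$ and a pure strategy $x_i \in G$, and assume that the remaining players $j \neq i$ independently draw $X_j \sim \lambda$. Since the map $g \mapsto g * h$ is a bijection of $G$ for any fixed $h \in G$, it preserves the uniform measure; by a straightforward induction on the number of factors (using independence), one sees that $X_1 * \dots * X_{i-1} * x_i * X_{i+1} * \dots * X_N$ is uniformly distributed on $G$, irrespective of the value of $x_i$. By linearity, the same conclusion holds if player $i$ randomizes according to any mixed strategy $\mu_i \in \mathcal{P}(G)$.

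It follows that when the opponents of player $i$ play $\lambda$, player $i$'s expected payoff is
\[
\mathbb{E}\bigl[\phi_i(X_1 * \dots * X_N)\bigr] \;=\; \frac{1}{|G|}\sum_{g \in G} \phi_i(g),
\]
which is a constant independent of player $i$'s strategy. Hence every strategy, and in particular $\lambda$, is a best response, and the profile $(\lambda, \dots, \lambda)$ is a Nash equilibrium. There is no real obstacle here; the proposition is a warm-up that fixes intuition, and the genuine difficulties (existence of invariant means, product-of-measures ambiguities, selection of the right extension) will appear only when $G$ is infinite and invariance no longer pins down a unique $\lambda$.
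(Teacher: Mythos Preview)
Your argument is correct and is essentially the paper's own proof in probabilistic language: the paper's change-of-variables step $y_j = x_1 * \dots * x_N$ together with the invariance of $\lambda$ is exactly your observation that one uniform factor makes the whole product uniform, so player $i$'s expected payoff is the constant $\sum_{g \in G}\phi_i(g)\,\lambda(g)$ regardless of her choice. The only cosmetic difference is that the paper never explicitly identifies $\lambda$ as the uniform measure and works directly with invariance, which foreshadows the infinite case, whereas you first pin $\lambda$ down as uniform and then use bijectivity of one-sided multiplication.
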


\begin{proof}
For $\mu_{1}, \dots, \mu_{N} \in \mathcal{P}(G)$, define
\begin{equation*}
u_{i}(\mu_{1}, \dots, \mu_{N}) = \sum_{x_{1} \in G} \dots \sum_{x_{N} \in G} u_{i}(x_{1}, \dots, x_{N}) \mu_{1}(x_{1}) \cdots \mu_{N}(x_{N}).
\end{equation*}
Then we have to prove that for all $i \in P$ and all $\mu_{i} \in \mathcal{P}(G)$ we have
\begin{equation}\label{eq:lambdamufinite}
u_{i}(\lambda, \dots, \lambda) \ge u_{i}(\lambda, \dots, \lambda, \mu_{i}, \lambda, \dots, \lambda). 
\end{equation}
Notice that, by definition of $u_{i}$, for all $j \in P$, for all $x_{1}, \dots, x_{j-1}, x_{j+1}, \dots, x_{N} \in G$ we have
\begin{align*}
\sum_{x_{j} \in G} u_{i}(x_{1}, \dots, x_{j}, \dots, x_{N}) \lambda(x_{j}) &= \sum_{x_{j} \in G} \phi_{i}(x_{1} * \dots * x_{j} * \dots * x_{N}) \lambda(x_{j}) \\
&=  \sum_{y_{j} \in G} \phi_{i}(y_{j}) \lambda(x_{j-1}^{-1} * \dots * x_{1}^{-1} * y_{j} * x_{N}^{-1} * \dots * x_{j+1}^{-1}) \\
&=  \sum_{y_{j} \in G} \phi_{i}(y_{j}) \lambda(y_{j}),
\end{align*}
where we used the change of variable $y_{j} = x_{1} * \dots * x_{j} * \dots * x_{N}$. Hence \begin{align*}
u_{i}(\lambda, \dots, \lambda) &= \sum_{x_{1} \in G} \dots \sum_{x_{N} \in G} \phi_{i}(x_{1} * \dots * x_{N}) \lambda(x_{1}) \cdots \lambda(x_{N}) \\
&=  \sum_{y_{j} \in G} \phi_{i}(y_{j}) \lambda(y_{j}) \\
&= \sum_{x_{1} \in G} \dots \sum_{x_{i} \in G} \dots \sum_{x_{N} \in G} \phi_{i}(x_{1} * \dots * x_{N}) \lambda(x_{1}) \cdots \mu_{i}(x_{i}) \cdots \lambda(x_{N}) \\
&= u_{i}(\lambda, \dots, \lambda, \mu_{i}, \lambda, \dots, \lambda),
\end{align*}
that is, \eqref{eq:lambdamufinite} holds.
\end{proof}
\citet{Mor:MM2010} proved an analogous result for zero-sum games.

In the rest of the paper we will find conditions for the existence of equilibria in countable games, that, among other things, allow to extend Proposition~\ref{pr:mainfinite} to the case of countable strategy sets.

\subsection{Countable games}

Given a set of players $P=\{1, \dots, N\}$, a countable set $S$ and bounded functions $u_i: S^{N} \to [0,1]$, $i \in P$, consider a game $\mathcal{G}= \langle P, S,(u_{i})_{i\in P}\rangle$, where $S$ is the strategy set of all players, and $u_i$ is the payoff function of player $i$. 

As mentioned in the Introduction, existence of mixed equilibria may fail if only countably additive mixed strategies are allowed. Therefore we consider a mixed extension of the game $\mathcal{G}$ where the space of mixed strategies is $\mathcal{P}(S)$, the space of all finitely additive probability measures on $S$. When doing this, a selection problem immediately arises. Given $\mu_{1}, \dots, \mu_{N} \in \mathcal{P}(S)$, a product measure $\otimes_{i=1}^{N} \mu_{i}$ is uniquely defined only on the algebra generated by the cylinders $S \times \dots \times S \times A \times S \dots  \times S$, for all $A \subset S$. This product measure can be (non-uniquely) extended to the power set $2^{S \times \dots \times  S}$. Different extensions correspond to different values of the expected payoff $\int_{S \times \dots \times S} u \diff \otimes_{i=1}^{N} \mu_{i}$. Here we consider a parametric class of possible extensions that has the advantage of being easily computable. Its simpler bivariate version has been used for zero-sum two-person games  by \citet{SheSei:BASE1996}.
Call $\Sigma(P)$ the space of permutations of $P$. Let  $\nu \in \mathcal{P}(\Sigma(P))$ and $\mu_{1}, \dots, \mu_{N} \in \mathcal{P}(S)$.
For $i \in P$, let  $u_i : S^{N} \to [0,1]$. Define
\begin{equation}\label{eq:ualpha}
u_i^{\nu}(\mu_{1}, \dots, \mu_{N}) :=
\sum_{\pi \in \Sigma(P)} \nu(\pi) \int_{S} \dots \int_{S} u_{i}(x_{1}, \dots, x_{N}) \diff \mu_{\pi(1)}(x_{\pi(1)}) \dots \diff \mu_{\pi(N)}(x_{\pi(N)}).
\end{equation}
This clearly defines an extension $\mu_{1} \boxtimes_{\nu} \dots \boxtimes_{\nu} \mu_{N}$ of $\otimes_{i=1}^{N} \mu_{i}$ to $2^{S \times \dots \times S}$ as follows.
 For $A \subset S \times \dots \times S$
\begin{equation}\label{eq:otimesalpha}
\mu_{1} \boxtimes_{\nu} \dots \boxtimes_{\nu} \mu_{N}(A) = \sum_{\pi \in \Sigma(P)} \nu(\pi) \int_{S} \dots \int_{S} \mathds{1}_{A}(x_{1}, \dots, x_{N}) \diff \mu_{\pi(1)}(x_{\pi(1)}) \dots \diff \mu_{\pi(N)}(x_{\pi(N)}),
\end{equation}
where $\mathds{1}_{A}$ is the indicator function of the set $A$.

For properties of integration with respect to finitely additive measures we refer the reader to \citet{Hil:TAMS1934},
\citet{DunSch1:Wiley1988}, \citet{DeF:Wiley1972}, and \citet{BhaBha:AcademicPress1983}. Since every bounded function on a countable set is integrable with respect to any finitely additive probability measure, we do not need any measure-theoretical assumption.

We can now state our main theorem. 
Let $(G, *)$ be a countable group%
\footnote{In the whole paper all countable groups will be endowed with the discrete topology.} 
and given $\phi_{1}, \dots, \phi_{N} : G \to [0,1]$, define
\begin{equation}\label{eq:uphi}
u_i(x_{1}, \dots, x_{N})=\phi_{i}(x_{1}* \dots * x_{N}).
\end{equation}
For $\boldsymbol{\phi} = (\phi_{1}, \dots, \phi_{N})$ call $\mathcal{G}(P,S, \boldsymbol{\phi}, \nu)$ the mixed extension of the game $\mathcal{G}$ when $u_{i}$ is defined as in \eqref{eq:uphi} and the product measure of the finitely additive mixed strategies is selected as in \eqref{eq:otimesalpha}. 

\begin{theorem}\label{th:main}
If $(G,*)$ is a countable abelian group, then the game $\mathcal{G}(P,G, \boldsymbol{\phi}, \nu)$ admits a Nash equilibrium that does not depend on $\nu$.
\end{theorem}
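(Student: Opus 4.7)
My plan is to exploit amenability: since $(G,*)$ is countable and abelian, it is amenable, so invariant means on $G$ exist, that is, $\lambda \in \mathcal{P}(G)$ satisfying $\int_{G} f(x*c)\diff\lambda(x)=\int_{G} f(x)\diff\lambda(x)$ for every bounded $f$ and every $c \in G$. I will show that, for a suitable choice of invariant mean $\lambda$, the symmetric profile $(\lambda,\dots,\lambda)$ is a Nash equilibrium with payoff $C_{i}:=\int_{G}\phi_{i}\diff\lambda$ to player $i$, a quantity that is visibly independent of $\nu$.

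The engine is the change-of-variable identity already exploited in the proof of Proposition~\ref{pr:mainfinite}: by invariance of $\lambda$ together with commutativity of $G$, integrating $\phi_{i}(x_{1}*\cdots *x_{N})$ against $\lambda$ in any single variable---while treating the remaining variables as fixed---returns the constant $C_{i}$. Thus, in the iterated integral of \eqref{eq:ualpha}, whenever the innermost coordinate under $\pi$ is assigned $\lambda$, the integral collapses immediately to $C_{i}$, and further integration against any probability measures of a constant returns the same constant. Two immediate consequences: on the equilibrium profile $(\lambda,\dots,\lambda)$ we obtain $u_{i}^{\nu}(\lambda,\dots,\lambda)=C_{i}$ for every $\pi$, hence for every $\nu$; and if player $i$ deviates to $\mu_{i}$ while all other players still play $\lambda$, then for every permutation $\pi$ with $\pi(1)\ne i$ the corresponding iterated integral still equals $C_{i}$.

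The delicate case is $\pi(1)=i$, when the innermost integration is against the possibly finitely additive deviation $\mu_{i}$ and Fubini's theorem is unavailable. Here the inner integral yields a function $\tilde{h}(s)=\int_{G} \phi_{i}(s*x_{i})\diff\mu_{i}(x_{i})$ of the sum $s$ of the remaining coordinates, and invariance of $\lambda$ then collapses all successive outer $\lambda$-integrations to the single constant $\int_{G}\tilde{h}\diff\lambda$; the difficulty is that this constant need not equal $C_{i}$ a priori. My plan is to invoke the variational principle of Section~\ref{se:Variational}, which should characterize the equilibrium $\lambda$ as an invariant mean that solves a suitable max-min problem over invariant means and thereby guarantees $\int_{G}\tilde{h}\diff\lambda \le C_{i}$ for every deviation $\mu_{i}$. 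The main obstacle is thus the failure of Fubini for iterated finitely additive integrals in these ``badly ordered'' permutations: amenability supplies the candidate equilibrium, and the variational characterization of $\lambda$ is the ingredient that controls the bad permutations uniformly, at which point convex-combining the per-$\pi$ inequalities with weights $\nu(\pi)$ delivers both the equilibrium property and its $\nu$-independence.
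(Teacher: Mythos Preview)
Your overall architecture is right---use amenability to get invariant means, collapse the iterated integral via translation-invariance whenever the innermost coordinate carries an invariant mean, and invoke the variational principle of Section~\ref{se:Variational} for the residual permutations with $\pi(1)=i$. This is exactly the paper's route. But there is a genuine gap: you insist on a \emph{symmetric} profile $(\lambda,\dots,\lambda)$, and no single invariant mean $\lambda$ will do in general.

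The reason is that the variational principle (Theorem~\ref{th:variationalprinciple}) tells you the supremum over $\mu_{i}$ of the ``bad'' iterated integral equals $I(\phi_{i})^{+}=\max_{\lambda'\in\mathcal{I}(G)}\int\phi_{i}\diff\lambda'$. For your inequality $\int\tilde h\diff\lambda\le C_{i}=\int\phi_{i}\diff\lambda$ to hold for every deviation $\mu_{i}$, you therefore need $\int\phi_{i}\diff\lambda = I(\phi_{i})^{+}$, i.e.\ $\lambda$ must maximize $\int\phi_{i}$ over $\mathcal{I}(G)$. But this has to hold for \emph{every} $i$, and there is in general no single invariant mean that simultaneously maximizes all of the $\phi_{i}$. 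Take $G=\mathbb{Z}$, $N=2$, $\phi_{1}=\mathds{1}_{\mathbb{N}}$, $\phi_{2}=\mathds{1}_{-\mathbb{N}}$: any $\lambda\in\mathcal{I}(\mathbb{Z})$ with $\lambda(\mathbb{N})=\theta\in(0,1)$ gives $C_{1}=\theta<1=I(\phi_{1})^{+}$, so player~$1$ has a profitable deviation on the permutation with $\pi(1)=1$; and the endpoints $\theta\in\{0,1\}$ fail for the other player.

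The fix---and this is what the paper does in Theorem~\ref{th:main2}---is to let each player use her own invariant mean $\overline\lambda_{i}\in\arg\max_{\lambda\in\mathcal{I}(G)}\int\phi_{i}\diff\lambda$. Your collapsing argument still works for $\pi(1)=j\ne i$, because the innermost integral against $\overline\lambda_{j}$ yields the constant $\int\phi_{i}\diff\overline\lambda_{j}$, and this constant is the same whether player $i$ plays $\overline\lambda_{i}$ or $\mu_{i}$; so those summands cancel in the comparison. For $\pi(1)=i$ the variational principle now bites, since the equilibrium summand is exactly $I(\phi_{i})^{+}$. Note that with this asymmetric profile the equilibrium \emph{payoffs} may depend on $\nu$ (the summand for permutation $\pi$ is $\int\phi_{i}\diff\overline\lambda_{\pi(1)}$), but the equilibrium \emph{profile} does not, which is what the theorem asserts.
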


A more general version of Theorem~\ref{th:main} will be proved in Section~\ref{se:Proof}.

\section{A variational principle for FC-groups}\label{se:Variational}

In this section we prove a preliminary result that has some interest \emph{per se} since it represents a new variational principle for a useful class of groups that we now define.
\begin{definition}\label{def:fcgroups}
A countable group $G$ is called an \emph{FC-group} if for all $g\in G$, the conjugacy class $\{h*g*h^{-1} : h\in G\}$ is finite.
\end{definition}

FC-groups have been introduced by
\citet{Bae:DMJ1948} and
\citet{Neu:PLMS1951}. Among others, abelian groups are FC, since every conjugacy class is a singleton.

Let $G$ be a countable group, $A\subset G$ and $g\in G$. Fix the following notation
\[
g*A=\{g*a : a\in A\}\qquad A*g=\{a*g : a\in A\}.
\]

\begin{definition}\label{def:invariantmeasure}
A finitely additive probability measure $\mu$ on the power set of $G$ is called
\begin{itemize}
\item  \emph{left-invariant mean}, if $\mu(A)=\mu(g*A)$, for all $g\in G$ and $A\subset G$,
\item  \emph{right-invariant mean}, if $\mu(A)=\mu(A*g)$, for all $g\in G$ and $A\subset G$,
\item  \emph{invariant mean}, if it is both left- and right-invariant.
\end{itemize}
For a given countable group $G$, we call  $\mathcal{L}(G)$,  $\mathcal{R}(G)$ and $\mathcal{I}(G)$ the class of all left-invariant, right-invariant and invariant means on $G$, respectively.
\end{definition}

It is well-known that the existence of a left-invariant mean is equivalent to the existence of a right-invariant mean, that is equivalent to the existence of an invariant mean\footnote{Indeed, given a left-invariant mean $\lambda$, one can define a right-invariant mean $\rho$ by setting $\rho(A)=\lambda(A^{-1})$, where $A^{-1} = \{a^{-1} : a \in A\}$. Now one can define an invariant mean $\mu$ by the formula $\mu(A)=\int_G\lambda(A * g^{-1})\diff \rho(g)$.}.

\begin{definition}\label{def:amenable}
A countable group is called \emph{amenable} if it admits a left-invariant mean.
\end{definition}
Amenable groups have been introduced by
\citet{vNe:FM1929} in relation to the Tarski paradox and they form a hugely studied class of groups still nowadays. Every finite group is amenable, just taking the uniform measure; abelian groups are amenable by a standard but non-trivial argument making use of the Markov-Kakutani fixed point theorem. The simplest example of a non-amenable group is the free group on two generators\footnote{The free group on two generators, say $x$ and $y$, is the group of all words in the letters $x,x^{-1},y,y^{-1}$, equipped with the operation of concatenation of words, where only the simplifications $x*x^{-1}=x^{-1}*x=y*y^{-1}=y^{-1}*y=e$ are allowed, being $e$ the empty word. It was observed by von Neumann himself that this group, usually denoted by $\mathbb F_2$, is not amenable. A celebrated example of Ol{$'$}{\v{s}}anski{\u\i} shows the existence of non-amenable groups which do not contain $\mathbb F_2$ \citep[see][]{Ols:UMN1980}.}.

We use the following theorem, which appeared in \citet[Theorem~3.2]{Pat:PJM1979}.
\begin{theorem}\label{th:Paterson}
Let $G$ be a countable amenable FC-group. Then
\[
\mathcal{R}(G)=\mathcal{L}(G)=\mathcal{I}(G).
\]
\end{theorem}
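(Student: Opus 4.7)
My plan is to prove the theorem by establishing $\mathcal{L}(G)=\mathcal{R}(G)$ directly; combined with the standing observation that $\mathcal{I}(G)=\mathcal{L}(G)\cap\mathcal{R}(G)$ and with amenability (which guarantees these three classes are nonempty, as recalled just after Definition~\ref{def:invariantmeasure}), this yields the three-way equality.

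I would first fix $\lambda\in\mathcal{L}(G)$ and show $\lambda\in\mathcal{R}(G)$ by computing $\lambda(A*g)$ for arbitrary $g\in G$ and $A\subseteq G$. The key is to exploit the FC condition through the centralizer $C=\{h\in G : h*g=g*h\}$: the finiteness of the conjugacy class $\{h*g*h^{-1} : h\in G\}$ is equivalent to $C$ having finite index, say $n$, in $G$, via the standard bijection $h*C\leftrightarrow h*g*h^{-1}$. Picking left coset representatives $h_{1},\dots,h_{n}$, I would partition $A=\bigsqcup_{i=1}^{n}A_{i}$ with $A_{i}=A\cap(h_{i}*C)$, and rewrite each piece as $A_{i}=h_{i}*B_{i}$ for suitable $B_{i}\subseteq C$.

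Now comes the crucial calculation, which uses that every element of $B_{i}$ commutes with $g$: $A_{i}*g=h_{i}*B_{i}*g=h_{i}*g*B_{i}$ and $g*A_{i}=g*h_{i}*B_{i}$. Two applications of left-invariance of $\lambda$ collapse both sets to common measure $\lambda(B_{i})$, so $\lambda(A_{i}*g)=\lambda(g*A_{i})$. Summing over the finitely many pieces by finite additivity, and then applying left-invariance once more, I obtain $\lambda(A*g)=\sum_{i=1}^{n}\lambda(A_{i}*g)=\sum_{i=1}^{n}\lambda(g*A_{i})=\lambda(g*A)=\lambda(A)$, which is right-invariance. The inclusion $\mathcal{R}(G)\subseteq\mathcal{L}(G)$ follows by the symmetric argument using right cosets of $C$.

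The main obstacle---and the only place the FC hypothesis is really used---is keeping the coset decomposition finite, so that finite additivity of $\lambda$ is enough to justify the summation step. Without FC the partition $A=\bigsqcup A_{i}$ could have infinitely many blocks and the argument would require countable additivity, which finitely additive means do not in general enjoy. Amenability plays no active role in the core computation; it is needed only to ensure that the three classes in the statement are nonempty, so that the asserted equality has content.
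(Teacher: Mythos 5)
Your argument is correct, and it is worth noting that the paper itself does not prove this statement at all: it simply imports it as Theorem~3.2 of Paterson's 1979 paper, whose proof is cast in the more general setting of locally compact groups with the appropriate conjugacy-class condition (the version the paper again invokes in Section~\ref{se:Uncountable}). Your centralizer argument is a genuinely self-contained, elementary alternative for the discrete case: fixing $g$, the FC hypothesis makes the centralizer $C$ of $g$ a finite-index subgroup, the decomposition $A=\bigsqcup_{i=1}^{n}(A\cap h_{i}*C)$ with $A_{i}=h_{i}*B_{i}$, $B_{i}\subseteq C$, is legitimate, and the identities $A_{i}*g=h_{i}*g*B_{i}$ and $g*A_{i}=g*h_{i}*B_{i}$ together with two uses of left-invariance do give $\lambda(A_{i}*g)=\lambda(B_{i})=\lambda(g*A_{i})$; finite additivity over the finitely many blocks then yields $\lambda(A*g)=\lambda(g*A)=\lambda(A)$, so $\mathcal{L}(G)\subseteq\mathcal{R}(G)$, and the mirrored argument with right cosets gives the reverse inclusion, whence all three classes coincide (amenability indeed enters only to make them nonempty). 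What your route buys is transparency and slightly greater generality in the discrete setting (countability is never used, and no structure theory or functional-analytic machinery is needed beyond finite additivity); what the citation buys the paper is the locally compact version needed later, which your combinatorial coset argument does not directly cover since there the relevant condition is compactness, not finiteness, of conjugacy-class closures.
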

This means that we have no distinctions between left- and -right-invariant means.

Given a countable amenable group $G$,  $\ell^\infty(G)$ denotes the Banach space of all bounded real-valued function on $G$. The main result of this section is the following variational principle.
\begin{theorem}\label{th:variationalprinciple}
If $G$ is a countable amenable FC-group and  $f:G\rightarrow[0,1]$, then for all $\pi \in \Sigma(P)$ and all $\lambda_{2}, \dots, \lambda_{N} \in\mathcal{I}(G)$ the functional $\Psi : \mathcal{P}(G) \to \mathbb{R}$ defined as
\[
\Psi(\mu) = \int \dots \int \int f(x_{1}* \dots * x_{N}) \diff \mu(x_{\pi(1)}) \diff \lambda_{2}(x_{\pi(2)}) \dots \diff \lambda_{N}(x_{\pi(N)})
\]
attains its maximum at some $\lambda \in\mathcal{I}(G)$.
\end{theorem}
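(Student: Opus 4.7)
The plan is to simplify $\Psi(\mu)$ using the invariance of $\lambda_2, \ldots, \lambda_N$, to verify translation invariance of $\Psi$ in $\mu$, and then to apply a fixed-point theorem to extract an invariant maximizer.

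First I would exploit the key identity
\[
\int h(a*x*b) \diff \lambda(x) = \int h(x) \diff \lambda(x), \qquad \lambda \in \mathcal{I}(G),\ a, b \in G,
\]
which follows by combining left- and right-invariance (both available by Theorem~\ref{th:Paterson}). After integrating $\mu$ innermost over $x_{\pi(1)}$, the remaining function of the other variables has the form $F(A, B) := \int f(A*x*B) \diff \mu(x)$ with $A = x_1 * \cdots * x_{\pi(1)-1}$ and $B = x_{\pi(1)+1} * \cdots * x_N$. Each subsequent integration against some $\lambda_m$ either collapses the entire dependence on the variables in $A$ or in $B$ via the identity, or becomes trivial once the function has already lost dependence on that variable. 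Iterating, $\Psi(\mu)$ reduces to
\[
\Psi(\mu) = \int \int \int f(x*z*y) \diff \mu(z) \diff \lambda_\ell(x) \diff \lambda_r(y),
\]
for specific $\lambda_\ell, \lambda_r \in \{\lambda_2, \ldots, \lambda_N\}$ (with the $x$- or $y$-integration absent in the boundary cases $\pi(1) = 1$ or $\pi(1) = N$).

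Next I would check $\Psi(\mu_g) = \Psi(\mu^g) = \Psi(\mu)$ for every $g \in G$, where $\mu_g$ and $\mu^g$ are the left- and right-translates. Right-translates and most left-translates are handled directly by one-sided invariance of $\lambda_r$ or $\lambda_\ell$, since translating $\mu$ amounts to shifting the argument of $F_\mu$ by $g$ on one side. In the boundary cases where only one of $\lambda_\ell, \lambda_r$ is present, I would use the FC-property to rewrite $g*z = z*(z^{-1}gz)$, partition $G = \bigsqcup_{i=1}^{k} C_i$ with $C_i = \{z : z^{-1}gz = g_i\}$ indexed by the finite conjugacy class $[g] = \{g_1, \ldots, g_k\}$, absorb each conjugate $g_i$ into the remaining invariant mean, and sum the finitely many contributions to recover $\Psi(\mu)$. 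Combined with the fact that $\Psi$ is affine and upper semi-continuous on the weak*-compact set $\mathcal{P}(G) \subset \ell^\infty(G)^*$ (Banach--Alaoglu), this shows that the set $M$ of maximizers is a non-empty, weak*-compact, convex, $G$-invariant subset of $\mathcal{P}(G)$.

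Finally, since $G$ is amenable and acts affinely and continuously on the convex compact set $M$ by left-translation, Day's fixed-point theorem yields a common fixed point $\lambda \in M$. Such a $\lambda$ is left-invariant, hence belongs to $\mathcal{I}(G)$ by Theorem~\ref{th:Paterson}, and attains the maximum of $\Psi$. The main obstacle will be the translation invariance of $\Psi$ under left-translates in the boundary cases: breaking the iterated finitely additive integration along the conjugacy-class partition is the essential place where the FC hypothesis on $G$ enters, beyond its role through Paterson's theorem in providing two-sided invariant means.
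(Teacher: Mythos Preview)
Your reduction of $\Psi(\mu)$ to a two- or three-fold iterated integral is correct, and your argument for translation invariance of $\Psi$---including the use of the FC hypothesis via the finite conjugacy-class partition in the boundary cases---is a nice idea and goes through. The problem lies entirely in the next step.

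The claim that $\Psi$ is weak*-upper semi-continuous on $\mathcal{P}(G)$ is false, and the paper singles this out in the Remark immediately following the statement of the theorem. Take $G=(\mathbb{Z},+)$, $N=2$, $f=\mathds{1}_{\mathbb{N}}$, and $\lambda_{2}=\lambda\in\mathcal{I}(\mathbb{Z})$ with $\lambda(\mathbb{N})=1$. Choose a net $\mu_{\alpha}$ of finitely supported probabilities on $-\mathbb{N}$ converging weak* to some $\mu\in\mathcal{I}(\mathbb{Z})$ with $\mu(\mathbb{N})=0$. Then $\Psi(\mu_{\alpha})=1$ for every $\alpha$ (one may swap the order of integration since $\mu_{\alpha}$ is countably additive with finite support), whereas $\Psi(\mu)=0$. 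Hence $\limsup_{\alpha}\Psi(\mu_{\alpha})=1>0=\Psi(\mu)$, so $\Psi$ is not upper semi-continuous. Equivalently, $\Psi$ is a linear functional on $\ell^{\infty}(G)^{*}$ that is \emph{not} weak*-continuous: were it given by integration against some $h\in\ell^{\infty}(G)$, evaluating at Dirac masses would force $h\equiv 1$, contradicting $\Psi(\mu)=0$.

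Without upper semi-continuity you cannot conclude that the set $M$ of maximizers is nonempty, nor---even if you knew it were nonempty---that it is weak*-closed, and hence Day's fixed-point theorem cannot be invoked. This is a genuine gap, not a technicality: the whole difficulty of the theorem is precisely that $\Psi$ fails to be continuous, so the existence of a maximizer is not automatic. The paper circumvents this by computing the supremum explicitly as $I(f)^{+}:=\max\{\int f\,d\lambda:\lambda\in\mathcal{I}(G)\}$. The inequality $\sup\Psi\ge I(f)^{+}$ is immediate; the reverse inequality is obtained by contradiction, pushing a hypothetical $\mu$ with $\Psi(\mu)>I(f)^{+}$ to an invariant measure on $G^{N}$ via a F{\o}lner-averaging lemma (Lemma~\ref{lem:minimalgain}) and then contradicting Lemma~\ref{lem:product}. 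Since $I(f)^{+}$ is by definition attained at some $\lambda\in\mathcal{I}(G)$, the maximum of $\Psi$ is attained there as well.
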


\begin{remark}
The space $\mathcal{P}(G)$ is a closed subset of the unit ball of the dual of $\ell^{\infty}(G)$ and therefore is compact in the weak* topology by the Banach-Alaoglu theorem \citep[see, e.g.,][Theorem 5.93]{AliBor:Springer2006}
Nevertheless the existence of the maximum for the functional $\Psi$ is not automatic, since it is not continuous.
To see this take $G=(\mathbb{Z}, +)$, $N=2$,  and consider the functional
\[
\Psi(\mu) := \int \int \mathds{1}_{\mathbb{N}}(x+y) \diff \mu(x) \diff \lambda(y),
\]
with $\lambda \in \mathcal{I}(\mathbb{Z})$ such that $\lambda(\mathbb{N})=1$. Call $\mu_{\alpha}$ a net of probability measures having finite support in $-\mathbb{N}$ and converging to some $\mu \in \mathcal{I}(\mathbb{Z})$. Observe that $\mu(\mathbb{N}) = 0$. If $\Psi$ were continuous we would have
\begin{equation}\label{eq:continuousPsi}
\Psi(\mu) = \lim_{\alpha} \Psi(\mu_{\alpha}) .
\end{equation}
But 
\begin{align*}
\Psi(\mu) &= \int \int \mathds{1}_{\mathbb{N}}(x+y) \diff \mu(x) \diff \lambda(y) = 0, \\
\Psi(\mu_{\alpha}) &= \int \int \mathds{1}_{\mathbb{N}}(x+y) \diff \mu_{\alpha}(x) \diff \lambda(y) \\
 &= \int \int \mathds{1}_{\mathbb{N}}(x+y) \diff \lambda(y) \diff \mu_{\alpha}(x) = 1 \quad \text{for all } \alpha,
\end{align*}
which contradicts \eqref{eq:continuousPsi}.
\end{remark}

Denote
\begin{equation*}
I(f)=\left\{\int f(x) \diff \lambda(x) : \lambda\in\mathcal{I}(G)\right\}.
\end{equation*}

The following lemma is folklore and follows from the fact that the set  $\mathcal I(G)$ is convex and weak*-compact, when seen as a subset of the dual of $\ell^\infty(G)$.

\begin{lemma}\label{lem:convex}
The set  $I(f) \subset \mathbb{R}$ is convex and compact.
\end{lemma}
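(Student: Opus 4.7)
The plan is to realize $I(f)$ as the image of $\mathcal{I}(G)$ under the evaluation functional $\Phi(\lambda) := \int f \diff \lambda$, and then transfer convexity and compactness from $\mathcal{I}(G)$ to $I(f)$ via standard duality arguments.

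First, I would verify that $\mathcal{I}(G)$ itself is convex and weak*-compact inside the dual $\ell^\infty(G)^{*}$. Convexity is immediate: if $\lambda_1, \lambda_2 \in \mathcal{I}(G)$ and $t \in [0,1]$, then $t\lambda_1 + (1-t)\lambda_2$ is again a positive finitely additive set function of total mass one, and invariance under every left and right translation is preserved by convex combinations. Weak*-compactness follows from the Banach--Alaoglu theorem combined with the fact that $\mathcal{I}(G)$ is weak*-closed: the conditions defining $\mathcal{I}(G)$ are
\[
\langle \mu, \mathds{1}_{G} \rangle = 1, \qquad \langle \mu, \mathds{1}_{A} \rangle \geq 0 \text{ for every } A \subset G, \qquad \langle \mu, \mathds{1}_{A} - \mathds{1}_{g*A} \rangle = 0, \qquad \langle \mu, \mathds{1}_{A} - \mathds{1}_{A*g} \rangle = 0,
\]
for all $g \in G$ and $A \subset G$. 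Each of these is a weak*-closed condition because it is the vanishing (or sign) of the evaluation of $\mu$ at a fixed element of $\ell^\infty(G)$, and such evaluations are weak*-continuous by definition of the weak* topology. Since $\mathcal{I}(G)$ is contained in the unit ball of $\ell^\infty(G)^{*}$, which is weak*-compact by Banach--Alaoglu, the closed subset $\mathcal{I}(G)$ is weak*-compact.

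Next, since $f \in \ell^\infty(G)$, the map
\[
\Phi: \mathcal{I}(G) \to \mathbb{R}, \qquad \Phi(\lambda) = \int f(x) \diff \lambda(x) = \langle \lambda, f \rangle
\]
is precisely the evaluation of $\lambda$ at $f$; hence it is weak*-continuous and affine (in fact linear in $\lambda$). Therefore $I(f) = \Phi(\mathcal{I}(G))$ is the image of a convex weak*-compact set under a continuous affine map, which is convex and compact in $\mathbb{R}$. This yields the lemma.

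The only mildly delicate point is the weak*-closedness of $\mathcal{I}(G)$, but this reduces to the observation that each invariance and positivity constraint is expressed as a linear functional of $\mu$ evaluated at a fixed bounded function on $G$; since all such functionals are continuous in the weak* topology, their (countable or uncountable) intersection of zero/nonnegativity sets is weak*-closed. No deeper amenability input is needed for this lemma beyond the nonemptiness of $\mathcal{I}(G)$, which is guaranteed for amenable $G$ and, in the FC case, is furnished by Theorem~\ref{th:Paterson}.
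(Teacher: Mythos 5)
Your proof is correct and follows essentially the same route as the paper, which treats the lemma as folklore following from the convexity and weak*-compactness of $\mathcal{I}(G)$ in the dual of $\ell^{\infty}(G)$; you have simply filled in the standard details (Banach--Alaoglu, weak*-closedness of the invariance constraints, and the image under the weak*-continuous evaluation at $f$).
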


This lemma guarantees that the following number is well defined.
\begin{equation}\label{eq:maxima}
I(f)^{+} := \max I(f).
\end{equation}

\begin{lemma}\label{lem:minimalgain}
Let $G$ be a countable amenable FC-group and $f:G \to [0,1]$. If there exist $\mu\in\mathcal{P}(G)$ and $L\in\mathbb{R}$ such that either 
\begin{equation}\label{eq:intmuL}
\int f(x*y) \diff \mu(x) \geq L \quad \text{for all $y \in G$},
\end{equation}
or
\begin{equation}\label{eq:intmuL2}
\int f(y*x) \diff \mu(x) \geq L \quad \text{for all $y \in G$},
\end{equation}
then there exists $\lambda \in \mathcal{I}(G)$ such that $\int f(x) \diff \lambda(x)\geq L$.
\end{lemma}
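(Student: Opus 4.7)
The plan is to upgrade the one-sided pointwise bound into an integral bound against an invariant mean by convolving $\mu$ with a one-sided invariant mean provided by amenability, and then invoking Theorem~\ref{th:Paterson} so that a one-sided invariant measure on a countable amenable FC-group is automatically fully invariant. I will outline the argument under the hypothesis \eqref{eq:intmuL}; the case \eqref{eq:intmuL2} is treated symmetrically by switching left and right throughout (starting from a left-invariant mean $\ell$ and using $\lambda(A) := \int \mu(y^{-1}*A) \diff \ell(y)$ instead).

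First I would fix a right-invariant mean $\rho \in \mathcal{R}(G)$, which exists because $G$ is amenable, and define a set function $\lambda : 2^{G} \to [0,1]$ by
\[
\lambda(A) := \int \mu(A*y^{-1}) \diff \rho(y).
\]
Finite additivity of $\lambda$ and the identity $\lambda(G)=1$ are inherited from the corresponding properties of $\mu$ and $\rho$, so $\lambda \in \mathcal{P}(G)$. To check right-invariance, for $g \in G$ and $A \subset G$ set $h(y) := \mu(A*y^{-1})$; then
\[
\lambda(A*g) = \int \mu(A*g*y^{-1}) \diff \rho(y) = \int h(y*g^{-1}) \diff \rho(y) = \int h(y) \diff \rho(y) = \lambda(A),
\]
where the last equality uses right-invariance of $\rho$. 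Since $G$ is a countable amenable FC-group, Theorem~\ref{th:Paterson} upgrades right-invariance to full invariance, so $\lambda \in \mathcal{I}(G)$.

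It remains to compute $\int f \diff \lambda$. By construction, $\int \mathds{1}_{A} \diff \lambda = \int \int \mathds{1}_{A}(x*y) \diff \mu(x) \diff \rho(y)$ for every $A \subset G$; linearity extends this to simple functions, and uniform approximation of bounded functions on a countable set by simple functions yields
\[
\int f \diff \lambda = \int \int f(x*y) \diff \mu(x) \diff \rho(y).
\]
The pointwise hypothesis $\int f(x*y) \diff \mu(x) \geq L$ for every $y \in G$ then gives $\int f \diff \lambda \geq L$, as required. The only point that would require care is this last identity, since iterated integrals against finitely additive measures are generally order-sensitive and Fubini fails; but here the order of integration is fixed throughout the construction of $\lambda$, so no swap of integrals is ever invoked and the extension from indicators to bounded $f$ is routine.
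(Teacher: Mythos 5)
Your proof is correct, but it takes a different route from the paper's. The paper also averages the right-translates of $\mu$, but it does so with a left-F{\o}lner sequence: it forms $\mu_{n}(A)=\frac{1}{|F_{n}|}\sum_{g\in F_{n}}\mu(A*g)$, extracts a weak* limit $\lambda$ of a subnet via Banach--Alaoglu, verifies $\lambda\in\mathcal{R}(G)$ from the F{\o}lner condition, and bounds $\int f\diff\lambda$ as a limit of Ces\`aro averages of the quantities $\int f(x*g^{-1})\diff\mu(x)\ge L$. You instead perform the averaging in one step by convolving $\mu$ against a right-invariant mean $\rho$, setting $\lambda(A)=\int\mu(A*y^{-1})\diff\rho(y)$; right-invariance of $\lambda$ then follows directly from invariance of $\rho$, with no F{\o}lner sequences, nets, or compactness argument, and the lower bound $\int f\diff\lambda\ge L$ comes from the fixed-order iterated integral $\int\!\int f(x*y)\diff\mu(x)\diff\rho(y)$, so no Fubini-type interchange is ever needed. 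Both arguments rely on the same two external inputs, amenability (to produce the averaging device) and Theorem~\ref{th:Paterson} (to upgrade one-sided invariance to $\lambda\in\mathcal{I}(G)$), and indeed the paper's weak* limit of F{\o}lner averages is morally your construction with $\rho$ realized as a limit of F{\o}lner means. What your version buys is brevity and the avoidance of subnets; what the paper's buys is self-containedness, since it manufactures the invariance from F{\o}lner's theorem rather than quoting the existence of a right-invariant mean (which, as the paper's footnote notes, follows from its left-invariant definition of amenability via $A\mapsto A^{-1}$). Two small points you should make explicit: the identity $\int h(y*g^{-1})\diff\rho(y)=\int h\diff\rho$ for the bounded function $h(y)=\mu(A*y^{-1})$ requires the same indicator-to-simple-to-bounded extension you invoke for the value identity (for indicators it is exactly $\rho(B*g)=\rho(B)$), and in your symmetric case the left-invariance computation uses $\mu(y^{-1}*g*A)=h(g^{-1}*y)$ with $h(y)=\mu(y^{-1}*A)$; both are routine and your proof stands.
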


Given a set $A$, its cardinality is denoted by $|A|$.

\begin{definition}
A sequence $F_{n}$ of finite subsets of $G$ is called a \emph{left-F{\o}lner sequence} for $G$ if for all $g\in G$ one has
\[
\lim_{n\rightarrow\infty}\frac{|(g*F_{n})\triangle F_{n}|}{|F_{n}|}=0
\]
and a \emph{right-F{\o}lner sequence} for $G$ if for all $g\in G$ one has
\[
\lim_{n\rightarrow\infty}\frac{|(F_{n}*g)\triangle F_{n}|}{|F_{n}|}=0,
\]
where $\triangle$ stands for the symmetric difference of sets; i.e. $A\triangle B=(A\cup B)\setminus(A\cap B) = (A \setminus B) \cup (B \setminus A)$.
\end{definition}
\citet{Fol:MS1955} proved that such sequences exist for all countable amenable groups.

\begin{proof}[Proof of Lemma~\ref{lem:minimalgain}]
Let $F_{n}$ be a left-F{\o}lner sequence for $G$.
Consider the sequence of measures $\mu_{n}$ defined by
\[
\mu_{n}(A)=\frac{1}{|F_{n}|}\sum_{g\in F_{n}}\mu(A*g)
\]
and let $\lambda$ be a weak* limit of (a subnet $\mu_{c(\alpha)}$ of) this sequence.
First we prove that  $\lambda\in\mathcal{I}(G)$. Indeed, for all $A\subset G$ and for all $h\in G$, one has
\begin{align*}
|\lambda(A*h)-\lambda(A)| &= \lim_\alpha|\mu_{c(\alpha)}(A*h)-\mu_{c(\alpha)}(A)| \\
&=
\lim_\alpha\left|\sum_{g\in F_{c(\alpha)}}\frac{1}{|F_{c(\alpha)}|}\left(\mu_{c(\alpha)}(A*h*g)-\mu_{c(\alpha)}(A*g)\right)\right|.
\end{align*}
Observe that the terms that are not in $(h*F_{c(\alpha)}) \triangle F_{c(\alpha)}$ cancel out. Majorizing with $1$ each of the remaining terms, we get
\[
|\lambda(A*h)-\lambda(A)|\leq \lim_{\alpha}\frac{|(h*F_{c(\alpha)})\triangle F_{c(\alpha)}|}{|F_{c(\alpha)}|}=0,
\]
which proves that $\lambda\in\mathcal{R}(G)$. Theorem~\ref{th:Paterson} implies that  $\lambda\in\mathcal{I}(G)$.
Now we prove that if \eqref{eq:intmuL} holds, then 
$\int f(x) \diff \lambda(x)\geq L$. Indeed, we have
\begin{align*}
\int f(x) \diff \lambda(x) &= \lim_\alpha\int f(x) \diff \mu_{c(\alpha)}(x) \\
&=
\lim_\alpha\int\frac{1}{|F_{c(\alpha)}|}\sum_{g\in F_{c(\alpha)}}f(x) \diff \mu(x*g) \\
&=
\lim_\alpha\int\frac{1}{|F_{c(\alpha)}|}\sum_{g\in F_{c(\alpha)}}f(x*g^{-1}) \diff \mu(x)\\
&=\lim_\alpha\frac{1}{|F_{c(\alpha)}|}\sum_{g\in F_{c(\alpha)}}\int f(x*g^{-1}) \diff \mu(x)\\
&\geq L.
\end{align*}
where the inequality stems from the hypothesis that each of the $|F_{c(\alpha)}| $ summands is larger or equal $L$.
The proof for the case \eqref{eq:intmuL2}  is similar.
\end{proof}

Given a group $G$ we call $G^{N}$ the direct product of $G$, $N$ times, endowed with the component-wise operation, still denoted by $*$, with a little abuse of notation. If $G$ is amenable, then $G^{N}$ is amenable, too \citep{Day:IJM1957}. Furthermore, a simple computation shows that if $G$ is an FC-group, then also $G^{N}$ is an FC-group.

\begin{lemma}\label{lem:product}
Under the hypotheses of Theorem~\ref{th:Paterson} 
\begin{align}
\max\left\{\int f(x_{1}* \dots * x_{N})\diff \sigma(x_{1}, \dots, x_{N}) : \sigma\in\mathcal{I}(G^{N})\right\}=I(f)^{+},
\end{align}
where $I(f)^{+}$ is defined as in \eqref{eq:maxima}.
\end{lemma}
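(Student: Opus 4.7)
The plan is to prove the two inequalities separately, relying on the fact that $G^{N}$ is itself a countable amenable FC-group so that Theorem~\ref{th:Paterson} applies to it as well.

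For the upper bound $\leq$, I would consider the multiplication map $m \colon G^{N} \to G$ defined by $m(x_{1}, \dots, x_{N}) = x_{1} * \dots * x_{N}$ and push an arbitrary $\sigma \in \mathcal{I}(G^{N})$ forward along $m$, setting $m_{*}\sigma(A) = \sigma(m^{-1}(A))$. The key computation is the pair of identities
\[
m^{-1}(h * A) = (h, e, \dots, e) * m^{-1}(A), \qquad m^{-1}(A * h) = m^{-1}(A) * (e, \dots, e, h),
\]
which together with the left- and right-invariance of $\sigma$ on $G^{N}$ show at once that $m_{*}\sigma$ is both left- and right-invariant on $G$, hence $m_{*}\sigma \in \mathcal{I}(G)$. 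Consequently,
\[
\int_{G^{N}} f(x_{1} * \dots * x_{N}) \diff \sigma = \int_{G} f \diff (m_{*}\sigma) \in I(f) \leq I(f)^{+}.
\]

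For the lower bound $\geq$, I would pick $\lambda \in \mathcal{I}(G)$ attaining $I(f)^{+}$, which exists by Lemma~\ref{lem:convex}, and define $\sigma$ on $2^{G^{N}}$ by the iterated integral
\[
\sigma(A) := \int_{G} \int_{G} \dots \int_{G} \mathds{1}_{A}(x_{1}, \dots, x_{N}) \diff \lambda(x_{N}) \dots \diff \lambda(x_{1}).
\]
Standard properties of integration with respect to finitely additive measures ensure that this defines an element of $\mathcal{P}(G^{N})$. Invariance of $\sigma$ follows by peeling off a translation $(h_{1}, \dots, h_{N})$ one coordinate at a time, using the elementary consequence $\int g(h * x) \diff \lambda(x) = \int g(x) \diff \lambda(x)$ (and its right analogue) of the invariance of $\lambda$. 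The integral of interest then collapses: for fixed $x_{1}, \dots, x_{N-1}$, left-invariance of $\lambda$ gives
\[
\int_{G} f\bigl((x_{1} * \dots * x_{N-1}) * x_{N}\bigr) \diff \lambda(x_{N}) = \int_{G} f \diff \lambda = I(f)^{+},
\]
and the remaining outer integrals simply integrate this constant, yielding $I(f)^{+}$.

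The main obstacle is that $G$ is only assumed to be FC rather than abelian and that the measures involved are finitely additive, so one cannot freely reorder integrals or rearrange products. The construction sidesteps both issues by translating in a single coordinate at a time, so that only left- or right-invariance of $\lambda$ is invoked at each step; Theorem~\ref{th:Paterson} then guarantees that these two notions coincide and supplies the invariant mean required on whichever side is needed.
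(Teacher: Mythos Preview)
Your proof is correct. The lower bound $\geq$ is the same as the paper's: both construct the iterated-integral measure $\lambda^{\otimes N}$, verify its invariance coordinate by coordinate, and collapse the integral using invariance of $\lambda$.

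The upper bound $\leq$ is where you diverge, and your argument is substantially simpler. The paper takes $\overline{\sigma}\in\mathcal{I}(G^{N})$ achieving the maximum $L$, approximates it by a net of countably additive measures $\sigma_{\alpha}$, and then manually constructs a pushforward on $G$ by summing $\sigma_{\alpha}$ over the fibers of the multiplication map; after passing to a weak* limit $\mu$, it checks that $\int f(g*x)\diff\mu(x)=L$ for all $g$ and invokes Lemma~\ref{lem:minimalgain} to upgrade $\mu$ to an invariant mean with integral at least $L$. You bypass all of this by observing that the pushforward $m_{*}\sigma$ is already an invariant mean on $G$, thanks to the identities $m^{-1}(h*A)=(h,e,\dots,e)*m^{-1}(A)$ and $m^{-1}(A*h)=m^{-1}(A)*(e,\dots,e,h)$; the change-of-variables formula $\int f\circ m\diff\sigma=\int f\diff(m_{*}\sigma)$ (valid for bounded functions via uniform approximation by simple functions) then gives the bound immediately. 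What the paper's route buys is only consistency with its overall machinery (F{\o}lner averaging and Lemma~\ref{lem:minimalgain}); your route avoids both the approximation step and any appeal to Lemma~\ref{lem:minimalgain}, and in fact does not even need Theorem~\ref{th:Paterson} on $G^{N}$ for this direction.
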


\begin{proof}
For 
\begin{equation}\label{eq:psif}
\psi(x_{1}, \dots, x_{N}) = f(x_{1}* \dots * x_{N})
\end{equation}
define the set
\begin{equation*}
\mathbb{R} \supset \Lambda(\psi)=\left\{\int \psi(x_{1}, \dots, x_{N}) \diff \sigma(x_{1}, \dots, x_{N}) : \sigma\in\mathcal{I}(G^{N})\right\}
\end{equation*}
and call $L=\max \Lambda(\psi)$, which exists by Lemma~\ref{lem:convex}. We have to prove that $L=I(f)^{+}$.\\

\noindent
\textbf{Proof of the inequality $L\geq I(f)^{+}$.}
By Lemma~\ref{lem:convex} there exists $\lambda\in\mathcal{I}(G)$ such that $\int f(x) \diff \lambda(x)=I(f)^{+}$. Let $\lambda^{\otimes N}$ denote the measure on $G^{N}$ defined by the functional
\[
\ell^\infty(G^{N}) \ni \gamma \mapsto \int \dots \int \gamma(x_{1}, \dots, x_{N})\diff \lambda(x_{1}) \dots \diff \lambda(x_{N}).
\]
First we show that $\lambda^{\otimes N} \in\mathcal{I}(G^{N})$. Indeed for all $\gamma \in \ell^{\infty} (G^{N})$ and for all $(g_{1}, \dots, g_{N}) \in G^{N}$ we have
\begin{align*}
&\int \gamma((g_{1}, \dots, g_{N}) * (x_{1}, \dots, x_{N})) \diff \lambda^{\otimes N}(x_{1}, \dots, x_{N}) \\
&\qquad= \int \gamma(g_{1}*x_{1}, \dots, g_{N}* x_{N}) \diff \lambda^{\otimes N}(x_{1}, \dots, x_{N}) \\
&\qquad=\int \dots \int \gamma(g_{1}*x_{1}, \dots, g_{N}* x_{N}) \diff \lambda(x_{1}) \dots \diff \lambda(x_{N})  \\
&\qquad=\int \dots \int \gamma(x_{1}, g_{2} * x_{2, },\dots, g_{N}* x_{N}) \diff \lambda(x_{1}) \dots \diff \lambda(x_{N})  \\
&\qquad=\int \dots \int \gamma(x_{1},  x_{2}, g_{3}* x_{3},\dots, g_{N}* x_{N}) \diff \lambda(x_{1}) \dots \diff \lambda(x_{N})  \\
&\qquad \qquad \vdots \\
&\qquad=\int \dots \int \gamma(x_{1},\dots, x_{N}) \diff \lambda(x_{1}) \dots \diff \lambda(x_{N}),
\end{align*}
where the third equality stems from the fact that $\lambda$ is a left invariant mean, therefore 
\[
\int \gamma(g_{1}*x_{1}, \dots, g_{N}* x_{N}) \diff \lambda(x_{1}) = \int \gamma(x_{1}, g_{2} * x_{2, }\dots, g_{N}* x_{N}) \diff \lambda(x_{1}).
\]
For the forth equality define $\zeta(x_{2}, \dots, x_{N}) = \int \gamma(x_{1}, x_{2}, \dots, x_{N}) \diff \lambda(x_{1})$. Then, since $\lambda$ is a left invariant mean, we have
\begin{equation*}
 \int \zeta(g_{2}* x_{2}, g_{3}* x_{3}, \dots, g_{N} * x_{N}) \diff \lambda(x_{2})  
=  \int\zeta(x_{2}, g_{3}* x_{3}, \dots, g_{N} * x_{N}) \diff \lambda(x_{2}),
\end{equation*}
i.e.,
\begin{multline*}
\int \int \gamma(x_{1, }g_{2}* x_{2}, g_{3}* x_{3}, \dots, g_{N} * x_{N}) \diff \lambda(x_{1}) \diff \lambda(x_{2})  \\
= \int \int \gamma(x_{1}, x_{2}, g_{3}* x_{3}, \dots, g_{N} * x_{N}) \diff \lambda(x_{1}) \diff \lambda(x_{2}).
\end{multline*}
The remaining equalities are obtained analogously. We have then shown that $\lambda^{\otimes N} \in \mathcal{L}(G^{N})$. 
By Theorem~\ref{th:Paterson} applied to the FC-group $G^{N}$ it follows that  $\lambda^{\otimes N} \in \mathcal{I}(G^{N})$. 

Now we show that  $\int f(x_{1} * \dots * x_{N}) \diff \lambda^{\otimes N}(x_{1}, \dots, x_{N}) = I(f)^{+}$. We have
\begin{align*}
\int f(x_{1} * \dots * x_{N}) \diff \lambda^{\otimes N}(x_{1}, \dots, x_{N})  &= \int \dots \int f(x_{1} * \dots * x_{N}) \diff \lambda(x_{1}) \dots \diff \lambda(x_{N})  \\
&= \int \dots \int f(x_{1}) \diff \lambda(x_{1}) \dots \diff \lambda(x_{N}) \\
&= \int \dots \int I(f)^{+} \diff \lambda(x_{2}) \dots \diff \lambda(x_{N}) \\
&= I(f)^{+}.
\end{align*}
This shows that $I(f)^{+}\in \Lambda(\psi)$ and therefore $I(f)^{+}\leq L$.\\

\noindent
\textbf{Proof of the inequality $L\leq I(f)^{+}$.}
Let $\overline{\sigma} \in \mathcal{I}(G^{N})$ be such that 
\[
\int f(x_{1} * \dots * x_{N}) \diff \overline{\sigma}(x_{1},\dots,x_{N})=L
\] 
(such a measure exists by Lemma~\ref{lem:convex} applied to the group $G^{N}$ and to the function $\psi(x_{1}, \dots, x_{N})=f(x_{1} * \dots * x_{N})$) and let $\sigma_{\alpha}$ be a net of countably additive probability measures on $G^{N}$ converging to $\overline{\sigma}$ in the weak* topology.   Define a countably additive probability measure $\mu_{\alpha}$ on $G$ by setting for all $x_{1} \in G$
\[
\mu_{\alpha}(x_{1})=\sum_{(x_{2}, \dots, x_{N}) \in G^{N-1}}\sigma_{\alpha}(x_{1}* (x_{2} * \dots * x_{N})^{-1}, x_{2}, \dots, x_{N}).
\]
To show that this is indeed a countably additive probability measure notice that for each $x_{1} \in G$ we have $\mu_{\alpha}(x_{1})\geq 0$ and therefore it suffices to show that $\sum_{x_{1}\in G}\mu_{\alpha}(x_{1})=1$. This follows from the observation that $\mu_{\alpha}(x_{1}) = \sigma_{\alpha}(A_{x_{1}})$, where 
\[
A_{x_{1}}=\left\{(x_{1}* (x_{2} * \dots * x_{N})^{-1}, x_{2}, \dots, x_{N}) : (x_{2}, \dots, x_{N}) \in G^{N-1}\right\}
\]
and the fibers $A_{x_{1}}$ form a partition of $G^{N}$. 

Now, let $\mu$ be any weak* limit of a subnet, denoted by $\mu_\beta$, of the net $\mu_{\alpha}$. For any $g\in G$, one has
\begin{align*}
\int_G f(g*x_{1}) \diff \mu(x_{1}) &= \lim_{\beta}\int_G f(g*x_{1}) \diff \mu_{\beta}(x_{1})\\
&=\lim_{\beta}\sum_{x_{1}\in G}f(g*x_{1}) \mu_{\beta}(x_{1})\\
&=\lim_{\beta}\sum_{x_{1}\in G} \sum_{(x_{2}, \dots, x_{N}) \in G^{N-1}}f(g*x_{1}) \sigma_{\beta}(x_{1}*(x_{2} * \dots * x_{N})^{-1}, x_{2}, \dots, x_{N})\\
&=\lim_{\beta} \sum_{(x_{2}, \dots, x_{N}) \in G^{N-1}} \sum_{x_{1}\in G} f(g*x_{1}) \sigma_{\beta}(x_{1}*(x_{2} * \dots * x_{N})^{-1}, x_{2}, \dots, x_{N})\\
&= \lim_{\beta}\sum_{(x_{2}, \dots, x_{N}) \in G^{N-1}} \sum_{z\in G*(x_{2} * \dots * x_{N})^{-1}}f(g*z*x_{2} * \dots * x_{N})
\sigma_{\beta}(z, x_{2}, \dots, x_{N}),
\end{align*}
where in the last equality we put $z=x_{1}*(x_{2} * \dots * x_{N})^{-1}$.
Observe that in the fourth equality we can exchange the order of summation since the series are nonnegative and convergent. For the same reason we can now replace $G*(x_{2} * \dots * x_{N})^{-1}$ with $G$ (the mapping $x\mapsto x*(x_{2} * \dots * x_{N})^{-1}$ is a permutation). Therefore, using again \eqref{eq:psif}, we have
\begin{align*}
&\lim_{\beta}\sum_{(x_{2}, \dots, x_{N}) \in G^{N-1}} \sum_{z\in G*(x_{2} * \dots * x_{N})^{-1}}f(g*z*x_{2} * \dots * x_{N})
\sigma_{\beta}(z, x_{2}, \dots, x_{N}) \\
&\qquad=
\lim_{\beta}\sum_{(z, x_{2}, \dots, x_{N}) \in G^{N}} f(g*z*x_{2} * \dots * x_{N})
\sigma_{\beta}(z, x_{2}, \dots, x_{N}) \\
&\qquad=
\lim_{\beta}\int  f(g*z*x_{2} * \dots * x_{N}) \diff
\sigma_{\beta}(z, x_{2}, \dots, x_{N}) \\
&\qquad=
\lim_{\beta}\int  \psi((g, 1_{G}, \dots, 1_{G}) * (z, x_{2}, \dots, x_{N})) \diff
\sigma_{\beta}(z, x_{2}, \dots, x_{N}) \\
&\qquad=
\int  \psi((g, 1_{G}, \dots, 1_{G}) * (z, x_{2}, \dots, x_{N})) \diff
\overline{\sigma}(z, x_{2}, \dots, x_{N}) \\
&\qquad=
\int  \psi(z, x_{2}, \dots, x_{N}) \diff
\overline{\sigma}(z, x_{2}, \dots, x_{N}) \\
&\qquad=
\int  f(z * x_{2} * \dots * x_{N}) \diff
\overline{\sigma}(z, x_{2}, \dots, x_{N}) \\
&\qquad=L.
\end{align*}
We have proved that $\int_G f(g*x_{1}) \diff \mu(x_{1}) = L$ for all $g\in G$.
Therefore, by Lemma \ref{lem:minimalgain},  there exists $\lambda\in\mathcal{I}(G)$ such that $\int f(x) \diff \lambda(x)\geq L$. It follows that $I(f)^{+}\geq L$.
\end{proof}

\begin{proof}[Proof of Theorem~\ref{th:variationalprinciple}]
Call $\boldsymbol{\lambda} = (\lambda_{2}, \dots, \lambda_{N})$ and
\begin{align}
S_{\boldsymbol{\lambda}, \pi}=\sup_{\mu\in\mathcal{P}(G)}\left\{
\int \dots \int \int f(x_{1}* \dots * x_{N}) \diff \mu(x_{\pi(1)}) \diff \lambda_{2}(x_{\pi(2)}) \dots \diff \lambda_{N}(x_{\pi(N)})
\right\}.
\end{align}
By Lemma \ref{lem:convex}, we know that $S_{\boldsymbol{\lambda}, \pi} \geq I(f)^{+}$, for all $\boldsymbol{\lambda}$ and $\pi$. 
We recall that the value $I(f)^{+}$ is attained, basically by definition, by an invariant mean. So it suffices to show that $S_{\boldsymbol{\lambda}, \pi} = I(f)^{+}$.
Now, by contradiction, suppose that there exist $\mu\in\mathcal P(G)$ and $\lambda_{2}, \dots, \lambda_{N}\in\mathcal{I}(G)$ such that
\[
\int \dots \int \int f(x_{1}* \dots * x_{N}) \diff \mu(x_{\pi(1)}) \diff \lambda_{2}(x_{\pi(2)}) \dots \diff \lambda_{N}(x_{\pi(N)}) =: L>I(f)^{+}.
\]
Call $\sigma$ the measure on $G^{N}$ defined by the functional
\begin{multline*}
\ell^\infty(G^{N}) \ni \gamma \mapsto \int_{G^{N}}\gamma(x_{1}, \dots, x_{N}) \diff \sigma(x_{1}, \dots, x_{N}) \\
=\int \dots \int \int \gamma(x_{1}, \dots,  x_{N}) \diff \mu(x_{\pi(1)}) \diff \lambda_{2}(x_{\pi(2)}) \dots \diff \lambda_{N}(x_{\pi(N)}).
\end{multline*}
Define $\psi(x_{1}, \dots, x_{N})=f(x_{1} * \dots * x_{N})$.  
We start considering the case $\pi(1) < N$. Setting $\pi(1) = j$, for any $(g_{1}, \dots, g_{N})\in G^{N}$  we have 
\begin{align}
&\int \psi((x_{1}, \dots, x_{N})*(g_{1}, \dots, g_{N})) \diff \sigma(x_{1}, \dots, x_{N}) \label{eq:psisigma}\\
&\qquad= \int \psi((x_{1}*g_{1}, \dots, g_{N}*x_{N}))\diff\sigma(x_{1},\dots, x_{N}) \nonumber \\
&\qquad= \int \dots \int \int f(x_{1}*g_{1}* x_{2}*g_{2}\dots *x_{N}*g_{N}) \diff \mu(x_{\pi(1)}) \diff \lambda_{2}(x_{\pi(2)}) \dots \diff \lambda_{N}(x_{\pi(N)}) \nonumber  \\
&\qquad= \int \dots \int \int f(x_{1}*x_{2}*\dots *x_{N}) \diff \mu(x_{\pi(1)}) \diff \lambda_{2}(x_{\pi(2)}) \dots \diff \lambda_{N}(x_{\pi(N)})  \nonumber \\
&\qquad=L, \nonumber 
\end{align}
where the third equality can be shown by setting
\[
\xi(x_{1}, \dots, x_{j-1}, x_{j+1}, \dots, x_{N}) = \int f(x_{1} * \dots * x_{N}) \diff \mu(x_{j})
\]
and noticing that 
\begin{align*}
&\int \dots \int \int f(x_{1}*g_{1} * \dots * x_{j-1}*g_{j-1} * x_{j} * g_{j}* x_{j+1} * g_{j+1} * \dots * x_{N} * g_{N}) \\
& \qquad \diff \mu(x_{j}) \diff \lambda_{2}(x_{\pi(2)}) \dots \diff \lambda_{N}(x_{\pi(N)}) \\
&\qquad = \int \dots \int  \xi(x_{1}*g_{1}, \dots, x_{j-1}*g_{j-1}, g_{j}* x_{j+1} * g_{j+1}, x_{j+2} *g_{j+2},  \dots, x_{N} * g_{N}) \\
& \qquad \qquad  \diff \lambda_{2}(x_{\pi(2)}) \dots \diff \lambda_{N}(x_{\pi(N)}) \\
&\qquad = \int \dots \int  \xi(x_{1}, \dots, x_{j-1}, x_{j+1},  \dots, x_{N} )  \diff \lambda_{2}(x_{\pi(2)}) \dots \diff \lambda_{N}(x_{\pi(N)}) \\
&\qquad= \int \dots \int \int f(x_{1}*x_{2}*\dots *x_{N}) \diff \mu(x_{\pi(1)}) \diff \lambda_{2}(x_{\pi(2)}) \dots \diff \lambda_{N}(x_{\pi(N)}).
\end{align*}
Therefore, we can apply Lemma \ref{lem:minimalgain} to the amenable FC-group $G^{N}$ and to the function $\psi(x_{1}, \dots, x_{N})$. This means that there exists an invariant measure $\rho'\in\mathcal{I}(G^{N})$ such that $\int f(x_{1} * \dots * x_{N}) \diff \rho'(x_{1}, \dots, x_{N}) \ge L>I(f)^{+}$. This contradicts Lemma~\ref{lem:product}.

If $\pi(1) = N$, then the result can be proved replacing \eqref{eq:psisigma} with 
\[
\int \psi((g_{1}, \dots, g_{N}) * (x_{1}, \dots, x_{N})) \diff \sigma(x_{1}, \dots, x_{N}) \\
\]
and following the steps of the previous case.
\end{proof}

\section{Proof of the main result}\label{se:Proof}

As before, $G$ is a countable amenable FC-group and, for $i \in P$, let $\phi_i:G\rightarrow[0,1]$ and $\overline\lambda_{i}\in \mathcal{I}(G)$ be such that
\[
\int \phi_{i} \diff \overline\lambda_{i}=I(\phi_{i})^{+}.
\]

Every abelian group is an amenable FC-group, hence Theorem~\ref{th:main} is a corollary of the following more general result.
\begin{theorem}\label{th:main2}
If $G$ is a countable amenable FC-group, then the profile of strategies $(\overline\lambda_{1}, \dots, \overline\lambda_{N})$ is a Nash equilibrium for the game $\mathcal{G}(P,G, \boldsymbol{\phi}, \nu)$.
\end{theorem}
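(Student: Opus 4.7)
The plan is to verify directly the Nash inequality
\[
u_i^{\nu}(\overline{\lambda}_{1},\dots,\overline{\lambda}_{N}) \geq u_i^{\nu}(\overline{\lambda}_{1},\dots,\overline{\lambda}_{i-1},\mu_i,\overline{\lambda}_{i+1},\dots,\overline{\lambda}_{N})
\]
for every player $i\in P$ and every deviation $\mu_i\in\mathcal{P}(G)$, by treating each permutation $\pi\in\Sigma(P)$ in the definition \eqref{eq:ualpha} separately. Writing $k:=\pi(1)$ for the index of the strategy used in the innermost integration of the $\pi$-th summand, I split into two cases according to whether $k=i$ or $k\neq i$.

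First I would handle the case $k\neq i$. Setting $a=x_{1}*\cdots*x_{k-1}$ and $b=x_{k+1}*\cdots*x_{N}$, both independent of $x_{k}$, the innermost integrand is $\phi_{i}(a*x_{k}*b)$. Since $\overline{\lambda}_{k}\in\mathcal{I}(G)$ is simultaneously left- and right-invariant (a fact granted by Theorem~\ref{th:Paterson} for the amenable FC-group $G$), right-invariance strips $b$ and then left-invariance strips $a$, giving
\[
\int_{G}\phi_{i}(a*x_{k}*b)\,\diff\overline{\lambda}_{k}(x_{k})=\int_{G}\phi_{i}(y)\,\diff\overline{\lambda}_{k}(y),
\]
a constant independent of all the remaining variables. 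The successive iterated integrations therefore reproduce this constant, and crucially the position at which $\mu_{i}$ (resp.\ $\overline{\lambda}_{i}$) is integrated is irrelevant once the integrand has become constant. Hence the $\pi$-th summand contributes the same amount, $\int_{G}\phi_{i}\,\diff\overline{\lambda}_{k}$, to both sides of the Nash inequality.

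Next I would treat the case $k=i$. At the equilibrium profile the same invariance reduction, applied to $\overline{\lambda}_{i}$, shows that the iterated integral collapses to $\int_{G}\phi_{i}\,\diff\overline{\lambda}_{i}=I(\phi_{i})^{+}$, so the $\pi$-th summand contributes exactly $I(\phi_{i})^{+}$. Under the deviation, $\mu_{i}$ occupies the innermost slot and the iterated integral is precisely the functional $\Psi(\mu_{i})$ from Theorem~\ref{th:variationalprinciple}, with $f=\phi_{i}$ and outer measures $\overline{\lambda}_{\pi(2)},\dots,\overline{\lambda}_{\pi(N)}\in\mathcal{I}(G)$. That theorem yields $\Psi(\mu_{i})\leq I(\phi_{i})^{+}$, so the deviation contribution for this $\pi$ is bounded above by the equilibrium contribution.

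Taking the $\nu$-weighted sum over $\pi\in\Sigma(P)$ then gives the Nash inequality for player $i$, and applying the argument to each $i\in P$ completes the proof. Independence of the equilibrium from $\nu$ is automatic since the profile $(\overline{\lambda}_{1},\dots,\overline{\lambda}_{N})$ is defined purely through the variational problem $\int\phi_{i}\,\diff\overline{\lambda}_{i}=I(\phi_{i})^{+}$, which does not involve $\nu$. The main obstacle is confined to the case $k=i$ and has already been overcome in Theorem~\ref{th:variationalprinciple}; the remainder is an exercise in bookkeeping, leveraging two-sided invariance to show that permutations with $k\neq i$ contribute the same value to both profiles.
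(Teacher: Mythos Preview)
Your proposal is correct and follows essentially the same approach as the paper's proof: both split the summands according to whether $\pi(1)=i$, observe that the summands with $\pi(1)\neq i$ coincide for the two profiles by invariance of the innermost measure, and handle the remaining summands via Theorem~\ref{th:variationalprinciple}. Your write-up is somewhat more explicit than the paper's in the $k\neq i$ case, but the argument is the same.
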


\begin{proof}
We have to prove that for all $i \in P$ and all $\mu_{i} \in \mathcal{P}(G)$ we have
\begin{equation*}
u_i^{\nu}(\overline{\lambda}_{1}, \dots, \overline{\lambda}_{i}, \dots, \overline{\lambda}_{N}) \ge u_i^{\nu}(\overline{\lambda}_{1}, \dots, \overline{\lambda}_{i-1}, \mu_{i}, \overline{\lambda}_{i+1}, \dots, \overline{\lambda}_{N}).
\end{equation*}
Using \eqref{eq:ualpha}, we know that
\begin{equation*}
u_i^{\nu}(\overline{\lambda}_{1}, \dots, \overline{\lambda}_{i}, \dots, \overline{\lambda}_{N}) = \sum_{\pi \in \Sigma(P)} \nu(\pi) \int_{G} \dots \int_{G} u_{i}(x_{1}, \dots, x_{N}) \diff \overline{\lambda}_{\pi(1)}(x_{\pi(1)})  \dots \diff \overline{\lambda}_{\pi(N)}(x_{\pi(N)}),
\end{equation*}
where $u_{i}(x_{1}, \dots, x_{N}) = \phi_{i}(x_{1} * \dots * x_{N})$.
Since all $\overline{\lambda}_{j}$ are invariant, each of the summands of $u_i^{\nu}(\overline{\lambda}_{1}, \dots, \overline{\lambda}_{i}, \dots, \overline{\lambda}_{N})$ is equal to the corresponding summand for 
\linebreak
$u_i^{\nu}(\overline{\lambda}_{1}, \dots, \overline{\lambda}_{i-1}, \mu_{i}, \overline{\lambda}_{i+1}, \dots, \overline{\lambda}_{N})$, except when $\pi(1)=i$. 
If we call $\Sigma_{i}(P)$  the class of all permutations of $P$ such that $\pi(1)=i$, then all we have to prove is
\begin{multline*}
 \sum_{\pi \in \Sigma_{i}(P)} \nu(\pi) \int_{G} \dots \int_{G }\int_{G} u_{i}(x_{1}, \dots, x_{N}) \diff \overline{\lambda}_{i}(x_{i}) \diff \overline{\lambda}_{\pi(2)}(x_{\pi(2)})  \dots \diff \overline{\lambda}_{\pi(N)}(x_{\pi(N)}) \\
\ge 
\sum_{\pi \in \Sigma_{i}(P)} \nu(\pi) \int_{G} \dots \int_{G} \int_{G} u_{i}(x_{1}, \dots, x_{N}) \diff \mu_{i}(x_{i})  \diff \overline{\lambda}_{\pi(2)}(x_{\pi(2)}) \dots \diff \overline{\lambda}_{\pi(N)}(x_{\pi(N)}).
\end{multline*}
This inequality can be shown to hold summand by summand. More precisely, by Theorem~\ref{th:variationalprinciple} we know that each summand with $\mu_{i}$ is majorized by some invariant measure; but $\bar{\lambda}_{i}$ maximizes the value over all invariant measures, so the result follows.
\end{proof}

\section{Some generalizations}\label{se:Generalization}

Here we consider some generalizations of Theorem~\ref{th:main2}. The first extension allows us to show the existence of an equilibrium in Wald's game, a classical example of countable game that does not admit equilibria in $\sigma$-additive mixed strategies.

\subsection{Transformations of group operations}

Let $\eta_{1}, \dots, \eta_{N}: G \to G$ be bijections and let 
\begin{equation}\label{eq:uphieta}
u^{\boldsymbol{\eta}}_{i}(x_{1}, \dots, x_{N}) = \phi_{i}(\eta_{1}(x_{1}) * \dots * \eta_{N}(x_{N})) \quad \text{for $i \in P$}.
\end{equation}
For $\boldsymbol{\eta} = (\eta_{1}, \dots, \eta_{N})$ call  $\mathcal{G}(P,G, \boldsymbol{\phi}, \boldsymbol{\eta}, \nu)$ the game where the payoffs are given by the mixed extensions of  \eqref{eq:uphieta}, as in \eqref{eq:ualpha}.

\begin{theorem}\label{th:generaleta}
If $G$ is a countable amenable FC-group, then the game $\mathcal{G}(P,G, \boldsymbol{\phi}, \boldsymbol{\eta}, \nu)$ admits a Nash equilibrium.
\end{theorem}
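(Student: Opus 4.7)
The plan is to reduce Theorem~\ref{th:generaleta} to Theorem~\ref{th:main2} via a change of variables induced by the bijections $\eta_{i}$. Since each $\eta_{i}:G\to G$ is a bijection, its pushforward on finitely additive probability measures
\[
T_{i}:\mathcal{P}(G)\to\mathcal{P}(G),\qquad (T_{i}\mu)(A)=\mu\bigl(\eta_{i}^{-1}(A)\bigr),
\]
is a well-defined bijection with inverse $T_{i}^{-1}\mu=(\eta_{i}^{-1})_{\#}\mu$. I would first establish the elementary change-of-variables identity
\[
\int_{G} h(\eta_{i}(x))\diff\mu(x) = \int_{G} h(y)\diff (T_{i}\mu)(y)
\]
for every bounded $h:G\to\mathbb{R}$; on a countable set this holds on indicator functions by definition, extends to simple functions by linearity, and then to bounded $h$ by uniform approximation, using the properties of the finitely additive integral recalled in the excerpt.

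Next I would iterate this identity inside the definition \eqref{eq:ualpha}: for any $\pi\in\Sigma(P)$ and any $\mu_{1},\dots,\mu_{N}\in\mathcal{P}(G)$,
\begin{align*}
&\int_{G}\!\!\cdots\!\int_{G} \phi_{i}\bigl(\eta_{1}(x_{1})*\cdots*\eta_{N}(x_{N})\bigr) \diff\mu_{\pi(1)}(x_{\pi(1)})\cdots\diff\mu_{\pi(N)}(x_{\pi(N)})\\
&\qquad = \int_{G}\!\!\cdots\!\int_{G} \phi_{i}(y_{1}*\cdots*y_{N}) \diff(T_{\pi(1)}\mu_{\pi(1)})(y_{\pi(1)})\cdots\diff(T_{\pi(N)}\mu_{\pi(N)})(y_{\pi(N)}),
\end{align*}
because in the iterated integral the variables are integrated one at a time, and the change of variables above can be applied successively to each variable. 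Averaging over $\pi$ with weights $\nu(\pi)$ gives
\[
u_{i}^{\boldsymbol{\eta},\nu}(\mu_{1},\dots,\mu_{N}) = u_{i}^{\nu}(T_{1}\mu_{1},\dots,T_{N}\mu_{N}),
\]
where the right-hand side refers to the payoff in the untransformed game $\mathcal{G}(P,G,\boldsymbol{\phi},\nu)$.

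With this identity in hand, the conclusion is immediate. Let $(\overline{\lambda}_{1},\dots,\overline{\lambda}_{N})$ be the Nash equilibrium of $\mathcal{G}(P,G,\boldsymbol{\phi},\nu)$ provided by Theorem~\ref{th:main2}, and set $\widetilde{\mu}_{i}:=T_{i}^{-1}\overline{\lambda}_{i}$. For each player $i$ and every deviation $\mu_{i}\in\mathcal{P}(G)$, the above identity together with the fact that $T_{i}$ is a bijection yields
\[
u_{i}^{\boldsymbol{\eta},\nu}(\widetilde{\mu}_{1},\dots,\widetilde{\mu}_{N}) = u_{i}^{\nu}(\overline{\lambda}_{1},\dots,\overline{\lambda}_{N}) \ge u_{i}^{\nu}(\overline{\lambda}_{1},\dots,T_{i}\mu_{i},\dots,\overline{\lambda}_{N}) = u_{i}^{\boldsymbol{\eta},\nu}(\widetilde{\mu}_{1},\dots,\mu_{i},\dots,\widetilde{\mu}_{N}),
\]
so $(\widetilde{\mu}_{1},\dots,\widetilde{\mu}_{N})$ is a Nash equilibrium of $\mathcal{G}(P,G,\boldsymbol{\phi},\boldsymbol{\eta},\nu)$.

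The only step that is not essentially bookkeeping is the successive change-of-variables inside the iterated finitely additive integral. The subtle point is that the product measure $\mu_{1}\boxtimes_{\nu}\cdots\boxtimes_{\nu}\mu_{N}$ is not uniquely defined on $2^{G^{N}}$; however, we are not using that product here, only the iterated integrals from \eqref{eq:ualpha}, which are evaluated variable by variable. For each fixed order $\pi$ this reduces the problem to $N$ one-dimensional changes of variables, each of which is legitimate. I expect no further obstacle; in particular no new invariant-mean argument is needed, since all the group-theoretic work has already been done in Theorem~\ref{th:main2}.
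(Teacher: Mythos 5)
Your proposal is correct and follows essentially the same route as the paper: the paper also pulls back the equilibrium $(\overline{\lambda}_{1},\dots,\overline{\lambda}_{N})$ of Theorem~\ref{th:main2} through the bijections, defining $\rho_{i}(A)=\overline{\lambda}_{i}(\eta_{i}(A))$, which is exactly your $\widetilde{\mu}_{i}=T_{i}^{-1}\overline{\lambda}_{i}$. The only difference is that the paper asserts the conclusion directly, while you spell out the variable-by-variable change-of-variables identity in the iterated finitely additive integrals that justifies it.
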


\begin{proof}
Call
\[
u_{i}(y_{1}, \dots, y_{N}) = \phi_{i}(y_{1} * \dots * y_{N}).
\]
By Theorem~\ref{th:main2} we know that the game $\mathcal{G}(P,G, \boldsymbol{\phi}, \nu)$ admits a Nash equilibrium given by $(\overline{\lambda}_{1}, \dots, \overline{\lambda}_{N})$. Therefore if we define for $i\ \in P$ a measure $\rho_{i}$ on $2^{G}$ as follows
\[
\rho_{i}(A) = \overline{\lambda}_{i}(\eta_{i}(A)),
\]
then the profile $(\rho_{1}, \dots, \rho_{N})$ is a Nash equilibrium of  $\mathcal{G}(P,G, \boldsymbol{\phi}, \boldsymbol{\eta}, \nu)$.
\end{proof}

\subsection{Graph games}

Theorem~\ref{th:main} requires that the payoff of each player be a function of the group operation over the strategies of all players. This hypothesis is quite restrictive. The next theorem substantially weakens it by allowing the payoff of player $i$ to be a function of the group operation over a---possibly small---subset of players, provided it includes player $i$ herself and at least another player. This can be interpreted as a game over a graph, where the payoff function of each player depends only on her action and the actions of her neighbors.

For every $i \in P$ let $P_{i} \subset P$ be such that $i \in P_{i}$ and $|P_{i}| \ge 2$.  Consider functions $\phi_{i} : G \to [0,1]$ such that
\begin{equation}\label{eq:uiNi}
u_{i}(x_{1}, \dots, x_{N}) = \phi_{i}(*_{j \in P_{i}} x_{j}),
\end{equation}
that is, the payoff of player $i$ depends only on the strategies of her neighbors.

Call $\boldsymbol{P} = (P_{1}, \dots, P_{N})$ and define the game $\mathcal{G}(P, \boldsymbol{P}, G, \boldsymbol{\phi}, \nu)$ where the payoff functions are as in \eqref{eq:uiNi}.

\begin{theorem}\label{th:generalNi}
If $G$ is a countable amenable FC-group, then the game $\mathcal{G}(P, \boldsymbol{P}, G, \boldsymbol{\phi}, \nu)$ admits a Nash equilibrium.
\end{theorem}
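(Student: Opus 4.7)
The plan is to reduce the problem to Theorem~\ref{th:main2} by exploiting the fact that player $i$'s payoff depends only on the strategies of the players in $P_i$. For each $i \in P$, choose $\overline{\lambda}_i \in \mathcal{I}(G)$ such that $\int \phi_i \diff \overline{\lambda}_i = I(\phi_i)^{+}$, which exists by Lemma~\ref{lem:convex}. I claim that $\boldsymbol{\overline{\lambda}} := (\overline{\lambda}_1, \dots, \overline{\lambda}_N)$ is a Nash equilibrium of $\mathcal{G}(P, \boldsymbol{P}, G, \boldsymbol{\phi}, \nu)$.

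Fix $i \in P$ and an arbitrary $\mu_i \in \mathcal{P}(G)$. The key observation is that $u_i(x_1, \dots, x_N) = \phi_i(*_{j \in P_i} x_j)$ does not depend on $x_k$ for $k \notin P_i$. Consequently, in any iterated integral
\[
\int_G \dots \int_G u_i(x_1, \dots, x_N) \diff \mu_{\pi(1)}(x_{\pi(1)}) \dots \diff \mu_{\pi(N)}(x_{\pi(N)}),
\]
each integration with respect to $\diff \mu_{\pi(\ell)}$ with $\pi(\ell) \notin P_i$ is applied to a partial integrand that is constant in $x_{\pi(\ell)}$, and hence equals the integrand multiplied by $\mu_{\pi(\ell)}(G) = 1$. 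Thus the iterated integral collapses to the iterated integral over the variables $\{x_j : j \in P_i\}$ only, taken in the order induced by $\pi$ on $P_i$.

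Denote this induced permutation by $\sigma := \pi|_{P_i} \in \Sigma(P_i)$ and set $\nu'(\sigma) := \sum_{\pi \in \Sigma(P) : \pi|_{P_i} = \sigma} \nu(\pi)$, which is a probability measure on $\Sigma(P_i)$. Then
\[
u_i^{\nu}(\mu_1, \dots, \mu_N) = \sum_{\sigma \in \Sigma(P_i)} \nu'(\sigma) \int_G \dots \int_G \phi_i(*_{j \in P_i} x_j) \diff \mu_{\sigma(1)}(x_{\sigma(1)}) \dots \diff \mu_{\sigma(|P_i|)}(x_{\sigma(|P_i|)}).
\]
The right-hand side is exactly player $i$'s expected payoff in the $|P_i|$-player group game on $G$ in the sense of Theorem~\ref{th:main2}, an application that is legitimate because $i \in P_i$ and $|P_i| \ge 2$. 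Applying Theorem~\ref{th:main2} to this restricted sub-game (where only player $i$'s payoff matters for us, and the other players in $P_i$ use invariant means) shows that $\overline{\lambda}_i$ is a best response for player $i$ against $(\overline{\lambda}_j)_{j \in P_i \setminus \{i\}}$. Since the strategies of players outside $P_i$ are irrelevant to $u_i^\nu$, it follows that $\overline{\lambda}_i$ remains a best response to $\boldsymbol{\overline{\lambda}}_{-i}$ in the full game. As $i$ was arbitrary, $\boldsymbol{\overline{\lambda}}$ is a Nash equilibrium.

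The main obstacle is the bookkeeping in the reduction step: justifying rigorously that the integration of an $x_k$-independent function against a finitely additive probability measure $\mu_k$ returns the function unchanged at every stage of the iterated integral, and that the resulting collapsed integral genuinely matches the payoff in the $|P_i|$-player sub-game after regrouping the summands according to the induced permutation $\sigma$. The required identity $\int c \diff \mu = c$ for constants is immediate, and since all $\phi_i$ take values in $[0,1]$ every intermediate integrand is bounded, so the successive finitely additive integrals are well-defined; but tracking which variables are ``free'' versus ``already integrated'' at each stage and verifying the validity of the swap of the two finite sums (over $\pi$ and over $\sigma$) must be done carefully.
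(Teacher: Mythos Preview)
Your argument is correct and amounts to exactly what the paper means by ``follows the line of the proof of Theorem~\ref{th:main2}'': you strip out the trivial integrations over variables outside $P_i$ and reduce player $i$'s best-response problem to the $|P_i|$-player group game, where the invariant-mean profile is already known to be an equilibrium. One small sharpening: what you invoke is not the \emph{statement} of Theorem~\ref{th:main2} but its proof (equivalently, Theorem~\ref{th:variationalprinciple}), since you need that $\overline{\lambda}_i$ is a best response against \emph{any} invariant means $(\overline{\lambda}_j)_{j\in P_i\setminus\{i\}}$, not only against the particular maximizers of the sub-game's own $\phi_j$'s.
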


The proof of this theorem follows the line of the proof of Theorem~\ref{th:main2} and is therefore omitted.

\section{Examples}\label{se:Examples}

\subsection{Games on $\mathbb{Z}$}

As the next proposition shows, when the game is played on $\mathbb{Z}$, the equilibrium of Theorem~\ref{th:main2} has an interesting structure, that is,  each equilibrium strategy has its mass either entirely adherent to $-\infty$ or to $+\infty$, or if it is split between the two, then it can be split in any possible way.
\begin{proposition}\label{pr:lambdapmN}
Consider a game  $\mathcal{G}(P, \mathbb{Z}, \boldsymbol{\phi}, \nu)$. 
Let  $(\lambda_{1}, \dots, \lambda_{N})$ be an equilibrium for this game, where  $\lambda_{1}, \dots, \lambda_{N} \in \mathcal{I}(\mathbb{Z})$. Then, for all $i \in P$, one of the following three possibilities is true
\begin{enumerate}[{\rm (a)}]
\item\label{it:pr:lambdapmN-a}
$\lambda_{i}(\mathbb{N})=0$,

\item\label{it:pr:lambdapmN-b}
$\lambda_{i}(\mathbb{N}) = 1$,

\item\label{it:pr:lambdapmN-c}
if $0 < \lambda_{i}(\mathbb{N}) <1$, then for any $\delta \in [0,1]$ there exists another equilibrium strategy $\lambda_{i}' \in \mathcal{I}(\mathbb{Z})$ with $\lambda_{i}'(\mathbb{N}) = \delta$.

\end{enumerate}
\end{proposition}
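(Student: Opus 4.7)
The plan is to focus on clause (c), since clauses (a), (b), and (c) together just partition the possible values of $\lambda_{i}(\mathbb{N}) \in [0,1]$ into $\{0\}$, $\{1\}$, and $(0,1)$, with nothing to prove in the first two cases. Fix a player $i$ and suppose $\alpha := \lambda_{i}(\mathbb{N}) \in (0,1)$. Since every invariant mean on $\mathbb{Z}$ annihilates singletons and hence every finite set, $\lambda_{i}(-\mathbb{N}) = 1 - \alpha$, and I would split
\[
\lambda_{i} = \alpha\,\lambda^{+} + (1-\alpha)\,\lambda^{-}, \qquad \lambda^{+}(A) := \frac{\lambda_{i}(A \cap \mathbb{N})}{\alpha}, \qquad \lambda^{-}(A) := \frac{\lambda_{i}(A \cap (-\mathbb{N}))}{1-\alpha},
\]
defined for every $A \subset \mathbb{Z}$, so that $\lambda^{+}(\mathbb{N}) = 1$ and $\lambda^{-}(\mathbb{N}) = 0$.

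The main technical step is to check that $\lambda^{\pm} \in \mathcal{I}(\mathbb{Z})$. Nonnegativity, finite additivity, and unit mass are immediate, so only translation invariance requires work. For any $A \subset \mathbb{Z}$ and $g \in \mathbb{Z}$, an elementary case analysis on the sign of $g$ shows that the symmetric difference
\[
\bigl((A+g)\cap\mathbb{N}\bigr)\,\triangle\,\bigl((A\cap\mathbb{N})+g\bigr)
\]
is contained in a set of cardinality at most $|g|$, namely the integers near $0$ that cross into or out of $\mathbb{N}$ under the shift. Since $\lambda_{i}$ vanishes on finite sets, $\lambda_{i}((A+g)\cap\mathbb{N}) = \lambda_{i}((A\cap\mathbb{N})+g)$; translation invariance of $\lambda_{i}$ itself then gives $\lambda_{i}((A\cap\mathbb{N})+g) = \lambda_{i}(A\cap\mathbb{N})$, and dividing by $\alpha$ yields $\lambda^{+}(A+g) = \lambda^{+}(A)$. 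The argument for $\lambda^{-}$ is identical.

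Next I would exploit the linearity of the functional $v:\mu \mapsto u_{i}^{\nu}(\lambda_{1},\dots,\lambda_{i-1},\mu,\lambda_{i+1},\dots,\lambda_{N})$ in $\mu$, which is immediate from \eqref{eq:ualpha}. The best-reply property $v(\lambda_{i}) \geq v(\lambda^{\pm})$, combined with $v(\lambda_{i}) = \alpha\,v(\lambda^{+}) + (1-\alpha)\,v(\lambda^{-})$, forces $v(\lambda^{+}) = v(\lambda^{-}) = v(\lambda_{i})$, so $\lambda^{+}$ and $\lambda^{-}$ are themselves best replies. For any $\delta \in [0,1]$, the convex combination $\lambda_{i}' := \delta\,\lambda^{+} + (1-\delta)\,\lambda^{-}$ lies in $\mathcal{I}(\mathbb{Z})$ and satisfies $\lambda_{i}'(\mathbb{N}) = \delta$, and by linearity $v(\lambda_{i}') = v(\lambda_{i})$, so $\lambda_{i}'$ is again a best reply. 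The other players' best-response conditions are preserved under the swap $\lambda_{i} \to \lambda_{i}'$ because, as in the proof of Theorem~\ref{th:main2}, when all opponents use invariant means, player $j$'s payoff from an invariant-mean strategy of hers depends on her opponents only through the constants $\int \phi_{j} \diff \lambda_{k}$, while her best-response value $I(\phi_{j})^{+}$ is independent of $\lambda_{-j}$; thus $\lambda_{j}$ remains a best reply.

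The step I expect to be most delicate is the invariance of $\lambda^{\pm}$, which hinges on the observation that cutting at $\mathbb{N}$ and translating commute modulo a finite set. The rest of the argument reduces to linearity and convexity on the compact convex set $\mathcal{I}(\mathbb{Z})$.
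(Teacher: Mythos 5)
Your argument is correct and is essentially the paper's own proof: both decompose $\lambda_{i}$ into its normalized restrictions to $\mathbb{N}$ and $-\mathbb{N}$ and use the best-reply (maximality) property together with the convex decomposition $\lambda_{i}=\alpha\lambda^{+}+(1-\alpha)\lambda^{-}$ to force the two halves to yield the same payoff, after which re-weighting the mixture gives the equilibrium strategies with $\lambda_{i}'(\mathbb{N})=\delta$. The only differences are cosmetic: you phrase the comparison through linearity of the payoff functional rather than the paper's explicit $K_{1}$-versus-$K_{2}$ case analysis, and you verify details the paper leaves implicit (invariance of the restricted measures and preservation of the other players' best-reply conditions).
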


\begin{proof}

Assume that neither \ref{it:pr:lambdapmN-a} nor \ref{it:pr:lambdapmN-b}  holds. We prove that \ref{it:pr:lambdapmN-c} must be true. Suppose that  $\lambda_{i} \in\mathcal{I}(\mathbb{Z})$ is an equilibrium strategy for player $i$ and $\lambda_{i}(\mathbb{-N}) = \theta \in (0,1)$.
Call
\[
K=\int \phi_{i} \diff \lambda_{i}, \quad \theta K_{1}=\int_{-\mathbb{N}} \phi_{i} \diff \lambda_{i}, \quad (1-\theta) K_{2} = \int_{\mathbb{N}} \phi_{i} \diff \lambda_{i}.
\]
Then 
\[
K = \theta K_{1} + (1-\theta) K_{2}.
\]
Assume that $K_{2} > K_{1}$.
Consider now a measure $\lambda_{i}' \in \mathcal{I}(\mathbb{Z})$ such that $\lambda_{i}'(-\mathbb{N})=0$ and for $A \subset \mathbb{N}$ we have $\lambda_{i}'(A) = (1-\theta)^{-1} \lambda_{i}(A)$. Then 
\[
\int \phi_{i} \diff \lambda_{i}' = K_{2} > K,
\] 
which, by Theorem~\ref{th:main2}, contradicts the fact that $\lambda_{i}$ is an equilibrium strategy. A similar argument holds if $K_{1} > K_{2}$. 

It is easy to see that if $K_{1}=K_{2}$, then any measure $\lambda_{i}''$ such that for $0 < \kappa < \theta^{-1}$
\begin{align*}
\lambda_{i}''(A) &= \kappa \lambda_{i}(A) \quad \text{for $A \subset -\mathbb{N}$}, \\
\lambda_{i}''(A) &= (1-\theta\kappa) (1-\theta)^{-1} \lambda_{i}(A) \quad\text{ for $A \subset \mathbb{N}$},
\end{align*}
satisfies $\int \phi_{i} \diff \lambda_{i}'' = K$ and therefore 
is an equilibrium strategy. This proves part \ref{it:pr:lambdapmN-c}. 
\end{proof}

\begin{example}[Matching pennies]
Consider the following countable version of matching pennies. The strategy set of each of the two players is $\mathbb{Z}$ and the payoff functions are
\[
u_{1}(x,y) = 1-u_{2}(x,y) = \mathds{1}_{2\mathbb{Z}}(x+y),
\]
where $k\mathbb{Z}$ is the set of multiples of $k$.
This game is equivalent to the one where players choose only Odd or Even and player 1 wins if both players make the same choice. Any profile of strategies $(\mu_{1}, \mu_{2})$ such that $\mu_{1}(2\mathbb{Z}) = \mu_{2}(2\mathbb{Z}) = 1/2$ is an equilibrium of the game.

The game can be generalized to $N$ players as follows. Consider a partition $A_{1}, \dots, A_{N}$ of $\mathbb{Z}$ and payoff functions
\[
u_{i}(x_{1}, \dots, x_{N}) = \mathds{1}_{A_{i}}(x_{1}+ \dots + x_{N}).
\]
If for each $i \in P$ the measure 
\[
\overline{\lambda}_{i} \in \arg \max_{\lambda_{i} \in \mathcal{I}(\mathbb{Z})} \lambda_{i}(A_{i}),
\]
then, by Theorem~\ref{th:main2}, the profile $(\overline{\lambda}_{1}, \dots, \overline{\lambda}_{N})$ is a Nash equilibrium of the game.

Notice that, if all sets $A_{1}, \dots, A_{N}$ are periodic (not necessarily with the same period), then for all $\lambda, \lambda' \in \mathcal{I}(\mathbb{Z})$ we have 
\[
\lambda(A_{i}) = \lambda'(A_{i}),
\]
Therefore any profile of invariant measures is an equilibrium. Let $m$ be the lowest common multiple of the periods $m_{i}$ of the $A_{i}$'s. 

Consider now the $m$ congruence classes $m\mathbb{Z} + k$.  
Any profile of probability measures $(\mu_{1}, \dots, \mu_{N})$ such that for $i \in P$ and $k \in \{0, \dots, m-1\}$
\[
\mu_{i}(m\mathbb{Z} + k) = \overline{\lambda}_{j}(m\mathbb{Z} + k) = 1/m
\]
is an equilibrium, too.
\end{example}

\begin{example}[Wald's game]
The following game was  introduced by \citet{Wal:AM1945} as a counterexample to the existence of minmax in zero-sum two-person games when the sets of strategies for both players are infinite.
Let the strategy set be $\mathbb{Z}$ and
\[
u_{1}(x,y) = 1 - u_{2}(x,y) =
\begin{cases}
1 & \text{if $x > y$}, \\
1/2 & \text{if $x = y$}, \\
0 & \text{if $x < y$}.
\end{cases}
\]
Call $z := -y$; then the payoff function becomes
\[
u_{1}(x,z) =
\begin{cases}
1 & \text{if $x+z > 0$}, \\
1/2 & \text{if $x+z = 0$}, \\
0 & \text{if $x+z < 0$}.
\end{cases}
\]
Then $u_{1}(x,z)$ is an example of $\phi(x+z)$ with $+$ as the group operation.
Applying Theorem~\ref{th:generaleta} we obtain the equilibrium $(\lambda, \rho)$ with $\lambda, \rho \in \mathcal{I}(\mathbb{Z})$ and $\lambda(\mathbb{N}) = \rho
(-\mathbb{N}) = 1$. This shows the striking difference between countably additive and finitely additive extensions of countable games.
\end{example}

\subsection{Games on $\mathbb{Z}^{2}$}

Consider a game where the strategy set of each player is $\mathbb{Z}^{2}$ and for $i \in \{1, \dots, N\}$ the payoff function $u_{i}$ is
\begin{equation}\label{eq:uiZ2}
u_{i}(x_{1} + \dots + x_{N}) = \mathds{1}_{C_{i}}(x_{1} + \dots + x_{N}),
\end{equation}
where $C_{i}$ is some open cone, i.e., $C_{i}$ is an open subset of $\mathbb{R}^{2}$ such that if $x \in C_{i}$, then $\beta x \in C_{i}$ for all $\beta \in \mathbb{R}_{+}$.

\begin{proposition}
Let $\phi_{i} = \mathds{1}_{C_{i}}$. If  $\overline{\lambda}_{i}\in \mathcal{I}(\mathbb{Z}^{2})$ and $\overline{\lambda}_{i}(C_{i})=1$ for all $i \in P$, then  the profile $(\overline{\lambda}_{1}, \dots, \overline{\lambda}_{N})$ is an equilibrium of the game $\mathcal{G}(P, \mathbb{Z}^{2}, \boldsymbol{\phi}, \nu)$.\end{proposition}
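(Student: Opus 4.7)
My plan is to apply Theorem~\ref{th:main2} essentially verbatim, since the statement here is a direct corollary. First I would observe that the group $(\mathbb{Z}^{2},+)$ with componentwise addition is abelian. Every abelian group is an FC-group (each conjugacy class is a singleton) and is amenable, so the hypotheses of Theorem~\ref{th:main2} are satisfied by $G=\mathbb{Z}^{2}$. Consequently, to show that $(\overline{\lambda}_{1},\dots,\overline{\lambda}_{N})$ is an equilibrium of $\mathcal{G}(P,\mathbb{Z}^{2},\boldsymbol{\phi},\nu)$ it is enough to verify that for every $i\in P$ the invariant mean $\overline{\lambda}_{i}$ attains the value $I(\phi_{i})^{+}$, i.e.,
\[
\int \phi_{i}\diff \overline{\lambda}_{i}=\max_{\lambda\in\mathcal{I}(\mathbb{Z}^{2})}\int \phi_{i}\diff \lambda.
\]

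Since $\phi_{i}=\mathds{1}_{C_{i}}$, for any $\lambda\in\mathcal{I}(\mathbb{Z}^{2})$ we have $\int \phi_{i}\diff \lambda=\lambda(C_{i})\in[0,1]$, so $I(\phi_{i})^{+}\le 1$. By the standing hypothesis $\overline{\lambda}_{i}(C_{i})=1$, we get $\int \phi_{i}\diff \overline{\lambda}_{i}=1$, which forces $I(\phi_{i})^{+}=1$ and shows that the maximum is attained at $\overline{\lambda}_{i}$. Invoking Theorem~\ref{th:main2} then yields that $(\overline{\lambda}_{1},\dots,\overline{\lambda}_{N})$ is a Nash equilibrium, concluding the proof.

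There is essentially no obstacle in the argument itself; the only substantive content beyond Theorem~\ref{th:main2} is the implicit consistency of the hypothesis, namely that invariant means with $\overline{\lambda}_{i}(C_{i})=1$ exist. This is reasonable for open cones in $\mathbb{Z}^{2}$ since a suitable left-F{\o}lner sequence contained in $C_{i}\cap\mathbb{Z}^{2}$ (which exists because $C_{i}$ is an open cone and therefore contains arbitrarily large translates of any finite F{\o}lner set once one moves deep enough inside $C_{i}$) can be used, as in the proof of Lemma~\ref{lem:minimalgain}, to build an invariant mean concentrated on $C_{i}$. Since the proposition takes the existence of such $\overline{\lambda}_{i}$ as a hypothesis, I would mention this only as a remark rather than include it in the formal proof.
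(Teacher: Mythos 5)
Your argument is correct, and its core is the same as the paper's: reduce via Theorem~\ref{th:main2} to checking that each $\overline{\lambda}_{i}$ maximizes $\lambda\mapsto\int\phi_{i}\diff\lambda=\lambda(C_{i})$ over $\mathcal{I}(\mathbb{Z}^{2})$, which is immediate because $\lambda(C_{i})\le 1$ and $\overline{\lambda}_{i}(C_{i})=1$, so $I(\phi_{i})^{+}=1$ is attained. The difference is one of emphasis: since the proposition as stated takes the existence of such $\overline{\lambda}_{i}$ as a hypothesis, you correctly treat the conditional claim as an almost trivial corollary of Theorem~\ref{th:main2} and relegate existence to a remark, whereas the paper's proof devotes essentially all of its effort to that existence question, constructing an invariant mean with $\overline{\lambda}_{i}(C_{i})=1$ explicitly: it passes to a convex open subcone $C_{i}^{*}\subset C_{i}$, shows that $F_{n}=\{-n,\dots,n\}^{2}\cap C_{i}^{*}$ is a F{\o}lner sequence (since $|F_{n}|$ grows quadratically while $|(g+F_{n})\triangle F_{n}|$ grows linearly), and takes a weak* limit of the normalized counting measures on $F_{n}$. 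Your alternative sketch---translating F{\o}lner sets deep into the cone and invoking the averaging argument of Lemma~\ref{lem:minimalgain}---also works, because in the abelian group $\mathbb{Z}^{2}$ translates of a F{\o}lner sequence remain F{\o}lner; so nothing is wrong, but be aware that the substantive mathematical content of the paper's proposition (non-vacuousness for every open cone) lives precisely in the construction you set aside.
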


\begin{proof}
By Theorem~\ref{th:main2} a measure $\overline{\lambda}_{i}$ is an equilibrium strategy if 
\[
\overline{\lambda}_{i} = \arg \max_{\lambda_{i} \in \mathcal{I}(\mathbb{Z}^{2})} \lambda_{i}(C_{i}).
\]
Since $\lambda_{i}(C_{i}) \le 1$ all we need to prove is that for each open cone $C_{i}$ there exists an invariant measure $\overline{\lambda}_{i}$ such that  $\overline{\lambda}_{i}(C_{i}) = 1$. This is achieved by using F{\o}lner sequences as follows.
For every open cone $C_{i}$ there exists a convex open cone $C_{i}^{*} \subset C_{i}$. We now prove the existence of  $\overline{\lambda}_{i}$ such that $\overline{\lambda}_{i}(C_{i}^{*}) = 1$. Call $Q_{n} := \{-n, \dots, n\}^{2} \subset \mathbb{Z}^{2}$ and $F_{n} := Q_{n} \cap C_{i}^{*}$. 
\begin{claim}\label{cl:Folner}
The sequence $F_{n}$ is a F{\o}lner sequence.
\end{claim}
\begin{proof}[Proof of Claim~\ref{cl:Folner}]
We need to prove that for all $g \in \mathbb{Z}^{2}$ we have
\[
\lim_{n \to \infty} \frac{|(g*F_{n}) \triangle F_{n}|}{|F_{n}|} = 0.
\]
There exists $\delta > 0$ such that for $n$ large enough $|F_{n}| > \delta n^{2}$. Moreover if $g=(g_{1}, g_{2})$, then $|(g*F_{n}) \triangle F_{n}| \le 4|g_{1} g_{2}| n$. This proves the Claim.
\end{proof}
Define a probability measure $\mu_{n}$ on $2^{\mathbb{Z}^{2}}$ as follows:
\[
\mu_{n}(A) =  \frac{|A \cap F_{n}|}{|F_{n}|}.
\]
Call $\overline{\lambda_{i}}$ a weak* limit of a subnet $\mu_{c(\alpha)}$ of $\mu_{n}$. First we prove that  $\overline{\lambda_{i}} \in \mathcal{I}(\mathbb{Z}^{2})$. For all $A \subset \mathbb{Z}^{2}$ and for all $g \in \mathbb{Z}^{2}$ 
\begin{align*}
|\overline{\lambda_{i}}(A+g) - \overline{\lambda_{i}}(A)| &= \lim_{\alpha} \left| \frac{|(A+g) \cap F_{c(\alpha)}|}{|F_{c(\alpha)}|} -  \frac{|A \cap F_{c(\alpha)}|}{|F_{c(\alpha)}|} 
\right| \\
&=  \lim_{\alpha} \left| \frac{|A \cap (F_{c(\alpha)}-g)|}{|F_{c(\alpha)}|} -  \frac{|A \cap F_{c(\alpha)}|}{|F_{c(\alpha)}|} 
\right| \\
& \le \lim_{\alpha}\left| \frac{|A \cap ((F_{c(\alpha)}-g) \triangle F_{c(\alpha)}) |}{|F_{c(\alpha)}|} 
\right|  \\
&= 0,
\end{align*}
where the inequality is due to the fact that 
\[
A \cap ((F_{c(\alpha)}-g) \setminus  (A \cap F_{c(\alpha)})) \subset A \cap ((F_{c(\alpha)}-g) \triangle F_{c(\alpha)}).
\]
Then we prove that $\overline{\lambda}_{i}(C_{i}^{*}) = 1$. Indeed
\[
\overline{\lambda}_{i}(C_{i}^{*}) = \lim_{\alpha} \frac{|C_{i}^{*} \cap F_{c(\alpha)}|}{|F_{c(\alpha)}|}.
\]
Since $F_{c(\alpha)} \subset C_{i}^{*}$ for all $\alpha$, we have 
\[
 \frac{|C_{i}^{*} \cap F_{c(\alpha)}|}{|F_{c(\alpha)}|} = 1 \quad \text{for all $\alpha$}. \qedhere
\]
\end{proof}

\subsection{Games on $\mathbb{Q} \cap [0,1] \mod 1$}\label{suse:Q01}

Take $G = \mathbb{Q} \cap [0,1]$ equipped with the sum modulo $1$. Then $G$ is a countable abelian group, so it is an amenable FC-group. Observe that any invariant mean $\lambda$ on $G$ satisfies the following property. 
With an abuse of language we use the  symbol $[a,b]$ to denote the set $\{x \in G: a \le x \le b\}$.

\begin{proposition}\label{pr:lambda01}
For every $\lambda \in \mathcal{I}(G)$ and every $a, b \in [0,1]$, $a < b$, we have $\lambda([a,b]) = b-a$.
\end{proposition}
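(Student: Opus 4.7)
The strategy is to compute $\lambda$ exactly on intervals with rational endpoints using translation-invariance together with finite additivity, and then to extend to arbitrary endpoints by an inner/outer approximation using monotonicity.

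First I would observe that every singleton has $\lambda$-measure zero. Invariance yields $\lambda(\{x\}) = \lambda(\{0\})$ for every $x \in G$, and since $G$ is infinite and $\lambda(G) = 1$, finite additivity applied to $n$ distinct singletons forces $n\lambda(\{0\}) \le 1$ for every $n$, hence $\lambda(\{0\}) = 0$. Consequently any finite set has $\lambda$-measure zero, so the distinction between open, half-open, and closed intervals is immaterial.

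Next, for a fixed positive integer $q$, consider the partition
\[
G = \bigsqcup_{j=0}^{q-1} I_j, \qquad I_j := \left[\tfrac{j}{q}, \tfrac{j+1}{q}\right) \cap \mathbb{Q}.
\]
Because $j/q + [0, 1/q) \subset [0,1)$, no mod-$1$ reduction occurs and $I_j$ is genuinely the $G$-translate $(j/q) * I_0$. Invariance then gives $\lambda(I_j) = \lambda(I_0)$ for every $j$, and finite additivity yields $q\,\lambda(I_0) = \lambda(G) = 1$, so $\lambda(I_0) = 1/q$. If $a = k/q$ and $b = l/q$ with $0 \le k \le l \le q$, then $[a,b] = \bigsqcup_{j=k}^{l-1} I_j \cup \{l/q\}$, so
\[
\lambda([a,b]) = \frac{l-k}{q} + 0 = b - a.
\]

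Finally, for arbitrary $a, b \in [0,1]$ with $a < b$, pick rational sequences $a_n \searrow a$, $b_n \nearrow b$, $a_n' \nearrow a$, $b_n' \searrow b$ (the last two eventually inside $[0,1]$), all with a common denominator at each stage. Monotonicity of $\lambda$ applied to the inclusions $[a_n,b_n] \subset [a,b] \subset [a_n',b_n']$ gives
\[
b_n - a_n = \lambda([a_n,b_n]) \le \lambda([a,b]) \le \lambda([a_n',b_n']) = b_n' - a_n',
\]
and letting $n \to \infty$ sandwiches $\lambda([a,b])$ at $b - a$.

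I do not expect any serious obstacle. The only delicate point is verifying in step two that the translates $I_j$ really are obtained from $I_0$ by the $G$-operation without wrap-around, so that invariance applies directly; once this is in hand, finite additivity does all the work and the approximation step is a routine squeeze.
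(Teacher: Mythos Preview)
Your argument is correct and is essentially the paper's own: partition $G$ into $q$ translates of $[0,1/q)$, invoke invariance to see they all have equal measure, and use finite additivity to conclude. Two minor remarks: you are in fact more careful than the paper, which uses overlapping closed intervals $[j/n,(j+1)/n]$ without explicitly justifying that the shared endpoints are null; and your final squeeze step for irrational $a,b$ is unnecessary in context, since here $[0,1]$ denotes $G=\mathbb{Q}\cap[0,1]$ and so $a,b$ are automatically rational---but the extra step is correct and harmless.
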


\begin{proof}
Observe that, by invariance, for every $n \in \mathbb{N}$
\[
\lambda\left(\left[0,\frac{1}{n}\right]\right) = \lambda\left(\left[\frac{1}{n}, \frac{2}{n}\right]\right) =  \dots =\lambda\left(\left[\frac{n-1}{n}, 1\right]\right) = \frac{1}{n}.
\]
Hence, by finite additivity, if $a = k/n$ and $b = m/n$, then
\[
\lambda([a,b]) = \lambda\left(\left[\frac{k}{n}, \frac{m}{n} \right] \right) = \frac{m-k}{n}.
\]
\end{proof}
Therefore any two invariant means on $G$ coincide on the algebra generated by intervals in $G$, but they can be extended in many different ways to $2^{G}$.

\begin{proposition}
Call $\mathfrak{c}$ the cardinality of $\mathbb{R}$. Then $|\mathcal{I}(G)| \ge 2^{\mathfrak{c}}$.
\end{proposition}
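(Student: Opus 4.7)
The plan is to realize $2^{\mathfrak{c}}$ distinct invariant means on $G = \mathbb{Q}/\mathbb{Z}$ by indexing them with the free ultrafilters on $\mathbb{N}$, each constructed as an ultrafilter limit of F{\o}lner averages. Three ingredients have to be assembled: a very explicit F{\o}lner sequence in $G$, the standard ultrafilter-limit construction of invariant means, and a family of test subsets of $G$ that separate distinct ultrafilters.

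I would take $F_n := \{k/n! \bmod 1 : 0 \le k < n!\}$, the cyclic subgroup of $G$ of order $n!$. Every $g = p/q \in G$ belongs to $F_n$ as soon as $q \mid n!$, so $g + F_n = F_n$ for all large $n$ and $(F_n)_n$ is F{\o}lner (in fact, eventually $g$-invariant for each $g$). For each free ultrafilter $\mathcal{U}$ on $\mathbb{N}$, define
\[
\lambda_\mathcal{U}(A) := \lim_\mathcal{U} \frac{|A \cap F_n|}{|F_n|}, \qquad A \subset G.
\]
Standard properties of ultrafilter limits of bounded sequences make $\lambda_\mathcal{U}$ a finitely additive probability measure on $2^G$, and the eventual stability $g + F_n = F_n$ gives $\lambda_\mathcal{U} \in \mathcal{I}(G)$ via the same type of argument used in the proof of Lemma~\ref{lem:minimalgain}.

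To separate distinct ultrafilters I would exploit the nested structure $F_n \subset F_{n+1}$ with $|F_{n+1} \setminus F_n| = n \cdot n!$ and the telescoping identity $\sum_{n=1}^{m-1} n \cdot n! = m! - 1$. For each $S \subset \mathbb{N}$, set $A_S := \bigcup_{n \in S} (F_{n+1} \setminus F_n)$. A direct count yields
\[
\frac{|A_S \cap F_m|}{|F_m|} = \frac{1}{m!} \sum_{n \in S,\, n \le m-1} n \cdot n!,
\]
which lies in $[(m-1)/m, \, 1)$ when $m-1 \in S$ and in $[0, \, 1/m)$ when $m-1 \notin S$. Along any free ultrafilter this forces $\lambda_\mathcal{U}(A_S) \in \{0,1\}$, equal to $1$ precisely when $\{m : m-1 \in S\} \in \mathcal{U}$. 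Given $\mathcal{U} \ne \mathcal{V}$ and $T \in \mathcal{U} \setminus \mathcal{V}$, the choice $S := \{n : n+1 \in T\}$ separates $\lambda_\mathcal{U}$ from $\lambda_\mathcal{V}$ via $A_S$.

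Combining these steps, $\mathcal{U} \mapsto \lambda_\mathcal{U}$ is an injection from the $2^{\mathfrak{c}}$ free ultrafilters of $\mathbb{N}$ into $\mathcal{I}(G)$, which gives $|\mathcal{I}(G)| \ge 2^{\mathfrak{c}}$. The main technical point is the two-sided estimate on the densities $|A_S \cap F_m|/|F_m|$: they must swing cleanly between values tending to $1$ and values tending to $0$ so that the ultrafilter limit is forced into $\{0,1\}$. The sharp identity $\sum_{n<m} n \cdot n! = m! - 1$ delivers exactly this gap; everything else is bookkeeping with ultrafilter limits of bounded sequences and with the subgroup structure of the $F_n$.
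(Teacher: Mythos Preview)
Your argument is correct. The F{\o}lner sequence $F_n = \{k/n! : 0 \le k < n!\}$ is eventually fixed by every element of $G$, so ultrafilter limits along any free $\mathcal{U}$ give invariant means; the density computations for the sets $A_S$ are right (the key being $(m-1)(m-1)! = m! - (m-1)!$ and $\sum_{n<m} n\cdot n! = m!-1$), and the separation of $\lambda_{\mathcal{U}}$ from $\lambda_{\mathcal{V}}$ via $A_{T-1}$ works exactly as you describe. The one external ingredient you invoke is Posp\'{\i}\v{s}il's theorem that $\mathbb{N}$ carries $2^{\mathfrak{c}}$ free ultrafilters; this is standard but worth flagging as the source of the cardinality.

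The paper proceeds entirely differently: it does not construct anything, but simply observes that $G$ with the discrete topology is a locally compact group whose Haar measure is counting measure, and then cites a general theorem of Chou (\emph{Trans.\ Amer.\ Math.\ Soc.}, 1970) which gives $|\mathcal{I}(G)| \ge 2^{\mathfrak{c}}$ for any infinite discrete amenable group. So the paper's proof is a one-line appeal to a deep black box, while yours is a hands-on construction tailored to the concrete structure $\mathbb{Q}/\mathbb{Z} = \bigcup_n \mathbb{Z}/n!\mathbb{Z}$. Your approach is more illuminating in that it exhibits explicit invariant means and explicit sets that distinguish them; the paper's approach is shorter and covers all countable amenable groups uniformly. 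Both routes ultimately rest on nontrivial set-theoretic input (Chou's theorem on one side, Posp\'{\i}\v{s}il's on the other).
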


\begin{proof}
If $G$ is endowed with the discrete topology, then it is a locally compact group, whose Haar measure is the counting measure. It follows that the set of essentially (with respect to the Haar measure) bounded real-valued functions on $G$ is equal to $\ell^{\infty}(G)$. Therefore \citet[Theorem on page 444]{Cho:TAMS1970} can be used to get the result.
\end{proof}

Consider now a game on $G$ where for $i \in P$,
\[
\phi_{i}(x) = \mathds{1}_{A_{i}}(x).
\]
If $A_{1}, \dots, A_{N}$ are in the algebra generated by intervals, then every profile $(\lambda_{1}, \dots, \lambda_{N})$, with $\lambda_{i} \in \mathcal{I}(G)$ is an equilibrium. Otherwise the equilibrium strategy for player $i$ is 
\[
\overline{\lambda}_{i} \in \arg\max_{\lambda \in \mathcal{I}(G)} \lambda(A_{i}).
\]

\begin{example}[Rock-scissors-paper]
The classical rock-scissors-paper game is played by two players whose strategy set is $G=\{R, S, P\}$. The payoff for player $1$ is
\begin{align}\label{eq:uRSP}
u_{1}(R,S)=u_{1}(P,R)=u_{1}(S,P)&=1, \nonumber\\
u_{1}(R,R)=u_{1}(P,P)=u_{1}(S,S)&=1/2, \\
u_{1}(S,R)=u_{1}(R,P)=u_{1}(P,S)&=0,  \nonumber
\end{align}
and $u_{2}(x,y) = 1- u_{1}(x,y)$.
It is well known that the unique equilibrium of this game is the profile of uniform mixed strategies for each player.
  
We can endow the set  $G$ with an operation $*$ that makes it an abelian group, as follows:
\begin{align*}
R*R = P*S = S*P &= R \\
R*P = P*R = S*S &= P \\
R*S = S*R = P*P &= S.
\end{align*}
Note that $R$ is the unit element and 
\[
R^{-1}=R, P^{-1}=S, S^{-1}=P.
\]
Actually the group $G$ can be identified with $\mathbb{Z}/3 \mathbb{Z} = \{\bar{0}, \bar{1}, \bar{2}\}$ using the following isomorphism $\Phi$: 
\[
\Phi(R)=\bar{0},\quad \Phi(P)=\bar{1},\quad \Phi(S)=\bar{2}.
\]

Therefore \eqref{eq:uRSP} can be re-written as follows. 
\[
u_{1}(x,y) = 
\begin{cases}
1 & \text{if $y*x^{-1} = S$,}\\
1/2  & \text{if $y*x^{-1} = R$,}\\
0 & \text{if $y*x^{-1} = P$.}
\end{cases}
\]
Thus rock-scissors-paper is a game with payoffs of the form \eqref{eq:uphieta}.

The game can be generalized to a countable setting as follows.
Let the strategy set of each of the two players be the group $G$ equal to $[0,1] \cap \mathbb{Q}$ equipped with the sum $\mod 1$ and let, for $0 < \alpha < \beta < 1$, the payoff function be
\[
u_{1}(x,y) = 
\begin{cases}
1 & \text{if $\beta < y-x < 1$,}\\
1/2  & \text{if $\alpha \le y-x \le \beta$, or $y-x=0$,}\\
0 & \text{if $0 < y-x < \alpha$.}
\end{cases}
\]

Combining Theorem~\ref{th:generaleta} and Proposition~\ref{pr:lambda01} we can show that every pair of invariant means is an equilibrium. 
To wit, the proof of  Theorem~\ref{th:generaleta} shows that $(\bar{\lambda}_{1}, \bar{\lambda}_{2})$ is an equilibrium if 
\begin{enumerate}[(a)]
\item\label{it:RSP-a}
$\bar{\lambda}_{2} \in \mathcal{I}(G)$,

\item\label{it:RSP-b}
$\widetilde{\lambda}_{1} \in \mathcal{I}(G)$, where $\widetilde{\lambda}_{1}(A)=\bar{\lambda}_{1}(-A)$,

\item\label{it:RSP-c}
$\widetilde{\lambda}_{1} \in \arg \max_{\lambda \in \mathcal{I}(G)} \int \phi_{1} \diff \lambda$, where
\[
\phi_{1}(x) = \mathds{1}_{(\beta, 1)}(x) + \frac{1}{2} \mathds{1}_{[\alpha,\beta]}(x) + \frac{1}{2} \mathds{1}_{\{0\}}(x),
\]

\item\label{it:RSP-d}
$\bar{\lambda}_{2} \in \arg \max_{\lambda \in \mathcal{I}(G)} \int (1-\phi_{1}) \diff \lambda$. 
\end{enumerate}

Proposition~\ref{pr:lambda01} shows that every invariant mean satisfies \ref{it:RSP-c} and \ref{it:RSP-d}. Furthermore, since $\widetilde{\lambda}_{1} \in \mathcal{I}(G)$ we have,  $\bar{\lambda}_{1} \in \mathcal{I}(G)$, too. Therefore every pair of invariant means satisfies \ref{it:RSP-a}--\ref{it:RSP-d} and hence is an equilibrium.
\end{example}

\begin{example}[Love and hate]
This game is played by an even number of players $N=2k$. The strategy set of each player is $\mathbb{Q} \cap [0,1] \mod 1$. The payoff functions have this form for $h \in \{1, \dots, k\}$
\begin{align*}
u_{2h}(x_{1}, \dots, x_{2k}) = -d(x_{2h}, x_{2h+1}), \\
u_{2h+1}(x_{1}, \dots, x_{2k}) = d(x_{2h+1}, x_{2h+2}), \\
\end{align*}
where $N+j := j$ and
\[
d(x,y) = \min(|x-y|,1-|x-y|).
\]
In words, every even player wants to be as close as possible to the following odd player and every odd player wants to be as far as possible from the following even player.

If we define 
\[
\eta_{2h}(x) = x, \quad
\eta_{2h+1}(x) = -x,
\]
then the payoffs can be written as
\begin{align*}
u_{2h}(x_{1}, \dots, x_{2k}) &= \phi_{2h}(\eta_{2h}(x_{2h})+\eta_{2h+1}(x_{2h+1})),\\
u_{2h+1}(x_{1}, \dots, x_{2k}) &= \phi_{2h+1}(\eta_{2h+1}(x_{2h+1})+\eta_{2h+2}(x_{2h+2})).
\end{align*}
Combining Theorems~\ref{th:generaleta} and \ref{th:generalNi} we obtain that there exist equilibria that are invariant means for each player. 
\end{example}

\begin{remark}
Theorems~\ref{th:generaleta} and \ref{th:generalNi} and especially their combination broaden considerably the class of games for which existence of equilibria can be shown using group-theoretic arguments.

On one hand some of the group games that we have considered are the countable version of finite games that have an equilibrium in uniform strategies, see, e.g., matching pennies and rock-scissors-paper. On the other hand the finite version of some other countable group games does not have an equilibrium in uniform strategies, see, e.g., Wald's game. This shows that the class of countable group games is not a trivial extension of the class of finite group games.

\end{remark}

\section{Uncountable games}\label{se:Uncountable}

In this section we consider games $\mathcal{G}(P,G,u,\nu)$, where $G$ is an uncountable group. Some hypotheses on $G$ will be needed to generalize the results of the previous sections. In particular we will require $G$ to be a locally compact metric group, which means that the group is equipped with a locally compact metrizable topology that is compatible with the group operation.\footnote{As examples of such a group consider for instance $(\mathbb{R}, +)$ and $S^{1}$, the unit circle in $\mathbb{R}^{2}$, where the group operation is the angle sum $\mod 2 \pi$.} In this case, a classical theorem by \citet{Haa:AM1933} guarantees the existence of a unique invariant countably additive measure on $G$. This measure is finite if and only if the group $G$ is compact, therefore in general this is not a mixed strategy. Nevertheless the Haar measure was used by \citet{vNe:FM1929} to define amenability.

Denote by $L^{\infty}(G)$ the Banach space of all real-valued functions on $G$ that are essentially bounded with respect to the Haar measure. Given $g\in G$ and $f\in L^{\infty}(G)$, we denote by $L_{g}f$ and $R_{g}f$ respectively the left- and the right-translation of $f$ by $g$, i.e.,
\[
(L_{g}f)(x)=f(g*x)\quad\text{and}\quad (R_{g}f)(x)=f(x*g).
\]

\begin{definition}\label{defin:amenableuncountable}
A locally compact topological group $G$ is called amenable if there exists a linear operator $T:L^{\infty}(G)\to\mathbb R$ verifying the following properties:
\begin{itemize}
\item[] \textbf{Positivity.} If $f:G\to\mathbb R_{+}$, then $T(f)\geq0$.
\item[] \textbf{Normalization.} If $f \equiv 1$, then $T(f)=1$.
\item[] \textbf{Invariance.} For all $f\in L^\infty(G)$ and for all $g\in G$, one has 
\begin{equation}\label{eq:invariantoperator}
T(L_{g}f)=T(f)=T(R_{g}f).
\end{equation}
\end{itemize}
A positive and normalized linear operator $T$ that verifies the first (second) equality in \eqref{eq:invariantoperator} is called \emph{left-invariant  (right-invariant) mean}; an operator that is both left- and right-invariant is called \emph{invariant mean}.
\end{definition}

Any positive and normalized linear operator $T:L^{\infty}(G)\to\mathbb R$ defines a finitely additive probability measure $\mu$ on $\mathcal{B}$, the $\sigma$-algebra of Borel subsets of $G$, as follows
\[
\mu(A) = T(\mathds{1}_{A}), \quad\text{for all $A \in \mathcal{B}$}.
\]
Therefore, a locally compact group is amenable if and only if there exists a finitely additive probability measure on $\mathcal{B}$ which is invariant with respect to the group operation. 

Now we consider a game  $\mathcal{G}(P,G,u,\nu)$, where for every $i \in P$
\[
u_{i}(x_{1}, \dots, x_{N}) = \phi_{i}(x_{1}* \dots * x_{N}),
\]
for some Borel-measurable function $\phi_{i}$ which is assumed to be bounded and integrable with respect to every finitely additive probability measure on $\mathcal{B}$.
The mixed extension of the game is achieved using \eqref{eq:ualpha} as in the countable case.

In order to prove the existence of equilibria we need the following two results. The first is a more general version of F{\o}lner's theorem.

\begin{theorem}[\citet{Fol:MS1955}]\label{th:generalforlner}
A locally compact group $G$ with Haar measure $\mu$ is amenable if and only if there is a sequence of compact subsets $F_n$ of $G$ such that $\mu(F_n)\to\infty$ and 
\[
\frac{\mu(F_{n} * (g\triangle F_{n}))}{\mu(F_{n})}\to 0 \quad \text{for all $g\in G$}.
\]
\end{theorem}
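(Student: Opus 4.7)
The plan is to prove the two implications separately, treating this as a standard result in abstract harmonic analysis going back to F{\o}lner.

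\textbf{Sufficiency ($\Leftarrow$).} Assume the existence of a sequence $F_n$ of compact subsets of $G$ with $\mu(F_n) > 0$ and the stated asymptotic invariance. Define countably additive probability measures $\nu_n$ on $\mathcal{B}$ by
\[
\nu_n(A) = \frac{\mu(A \cap F_n)}{\mu(F_n)},
\]
and let $T_n \in (L^\infty(G))^*$ be the associated positive normalized functionals $T_n(f) = \int f \diff \nu_n$. By the Banach--Alaoglu theorem the unit ball of $(L^\infty(G))^*$ is compact in the weak* topology, so after passing to a subnet we obtain a weak* limit $T$. Positivity and normalization pass to the limit. For left-invariance, the translation-invariance of Haar measure gives
\[
|T_n(L_g f) - T_n(f)| \le \|f\|_\infty \, \frac{\mu((g * F_n) \triangle F_n)}{\mu(F_n)} \longrightarrow 0
\]
for every $g \in G$ and every $f \in L^\infty(G)$, so $T(L_g f) = T(f)$; right-invariance follows analogously (or by the standard trick of averaging a left-invariant mean against a right-invariant one, as in the footnote following Definition~\ref{def:invariantmeasure}). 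Hence $T$ is an invariant mean and $G$ is amenable.

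\textbf{Necessity ($\Rightarrow$).} This is the more delicate half. Assume $G$ is amenable. I would first pass through Reiter's condition: amenability is equivalent to the existence, for every compact $K \subset G$ and every $\varepsilon > 0$, of a nonnegative $f \in L^1(G)$ with $\|f\|_1 = 1$ such that $\|L_g f - f\|_1 < \varepsilon$ for all $g \in K$. The implication ``amenability $\Rightarrow$ Reiter'' is obtained by approximating an invariant mean $T$ by convex combinations of point-mass-type functionals coming from $L^1$, using that the image of the $L^1$ probability densities is weak* dense in the set of means (this is a Day/Namioka-type convexity argument).

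\textbf{From Reiter to F{\o}lner (main obstacle).} The principal technical step is converting almost-invariant functions into almost-invariant sets via the layer-cake identity. Writing $f = \int_0^\infty \mathds{1}_{\{f > t\}} \diff t$, one obtains the pointwise identity $|L_g f - f| = \int_0^\infty |\mathds{1}_{g*\{f>t\}} - \mathds{1}_{\{f>t\}}| \diff t$, and integrating yields
\[
\|L_g f - f\|_1 = \int_0^\infty \mu\bigl((g * E_t) \triangle E_t\bigr) \diff t, \qquad \|f\|_1 = \int_0^\infty \mu(E_t) \diff t,
\]
where $E_t = \{f > t\}$. Given a finite $K \subset G$ and $\varepsilon > 0$, summing over $g \in K$ and applying an averaging/Markov argument to the ratio $\sum_{g \in K} \mu((g*E_t)\triangle E_t) / \mu(E_t)$ with respect to the measure $\mu(E_t) \diff t$ yields a level $t^*$ with $\mu(E_{t^*}) > 0$ and
\[
\sum_{g \in K} \frac{\mu((g * E_{t^*}) \triangle E_{t^*})}{\mu(E_{t^*})} < \varepsilon.
\]
The level sets $E_{t^*}$ are measurable; one then replaces them by compact subsets at negligible cost using inner regularity of the Haar measure. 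Taking an increasing sequence of compact $K_n$ that exhausts $G$ and $\varepsilon_n \downarrow 0$ produces the required F{\o}lner sequence, and an additional inflation step (replacing $F_n$ with $F_n \cup F_n'$ for suitably chosen sets) ensures $\mu(F_n) \to \infty$ in the non-compact case, while the compact case is handled trivially by $F_n = G$. The delicate passage from Reiter to F{\o}lner via the layer-cake decomposition and the Markov-type selection of a good level is the step where I expect the main technical effort to lie.
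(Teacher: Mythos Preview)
The paper does not actually prove this statement: Theorem~\ref{th:generalforlner} is quoted from \citet{Fol:MS1955} as a black-box tool needed for Theorem~\ref{th:maingeneral}, and no proof or sketch is given. So there is nothing to compare against in the paper.

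Your proposal is the standard textbook route (Day--Reiter--Namioka), and the outline is correct. A couple of minor remarks if you intend to turn it into a full proof. First, in the sufficiency half you only need $\mu(F_n)>0$, not $\mu(F_n)\to\infty$; the latter is extra information in the statement and plays no role in constructing the mean. Second, in the passage from amenability to Reiter's property you gesture at ``Day/Namioka-type convexity''; the precise ingredients are (i) weak* density of $L^1$-probability densities in the set of means on $L^\infty$ (so an invariant mean is a weak* limit of a net of $f_\alpha\in L^1$), and (ii) Mazur's theorem applied to the convex set $\{\bigoplus_{g\in K}(L_g f - f): f\ge 0,\ \|f\|_1=1\}$ in $\bigoplus_{g\in K} L^1$ to upgrade weak approximate invariance to norm approximate invariance simultaneously for finitely many $g$. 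Third, the ``inflation step'' to force $\mu(F_n)\to\infty$ in the non-compact case deserves one more sentence: having found F{\o}lner sets $E_n$ of arbitrary positive measure, one can translate and disjointly union copies of $E_n$ (or simply rescale the Reiter function before applying the layer-cake) to push the measure up without spoiling the ratio. None of this is a genuine gap; your sketch would be accepted as a correct summary of the classical proof.
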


The following theorem is folklore, \citep[see, e.g.,][Theorem 4.3]{Mer:ZRB2000}.

\begin{theorem}\label{th:weakstar}
The set of finitely supported probability measures on a metric space is dense, with respect to the weak* topology, in the set of all finitely additive probability measures.
\end{theorem}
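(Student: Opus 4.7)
The plan is to establish density by a Hahn--Banach separation argument. Writing $\mathcal{P}_{\mathrm{fa}}(X)$ for the set of finitely additive probability measures on the Borel $\sigma$-algebra $\mathcal{B}(X)$ of the metric space $X$, and $\mathcal{P}_{\mathrm{fs}}(X)$ for its subset of finitely supported probability measures, one views $\mathcal{P}_{\mathrm{fa}}(X)$ as a weak*-closed convex subset of the unit ball of the dual of $B(X)$, the Banach space of bounded Borel-measurable real-valued functions on $X$ (bounded Borel functions are automatically integrable against any finitely additive probability measure on $\mathcal{B}(X)$, by the standard simple-function construction, so the dual pairing $\langle f,\nu\rangle := \int f\diff \nu$ is well defined).

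First I would note that $\mathcal{P}_{\mathrm{fs}}(X)$ is convex, since a convex combination of two finitely supported probability measures is again finitely supported; hence its weak* closure $\overline{\mathcal{P}_{\mathrm{fs}}(X)}$ is also convex. Suppose, toward a contradiction, that there exists $\mu\in\mathcal{P}_{\mathrm{fa}}(X)\setminus\overline{\mathcal{P}_{\mathrm{fs}}(X)}$. Since $\{\mu\}$ is weak*-compact and $\overline{\mathcal{P}_{\mathrm{fs}}(X)}$ is weak*-closed and convex, the Hahn--Banach separation theorem in the locally convex space $B(X)^*$ equipped with the weak* topology yields a weak*-continuous linear functional $\Lambda$ and real numbers $\alpha<\beta$ such that $\Lambda(\nu)\le\alpha$ for every $\nu\in\overline{\mathcal{P}_{\mathrm{fs}}(X)}$ while $\Lambda(\mu)\ge\beta$.

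Next I would invoke the standard identification of weak*-continuous linear functionals on a dual Banach space with elements of the predual: there exists $f\in B(X)$ with $\Lambda(\nu)=\int f\diff \nu$ for every $\nu\in B(X)^*$. For each $x\in X$ the Dirac mass $\delta_x$ belongs to $\mathcal{P}_{\mathrm{fs}}(X)$, and therefore $f(x)=\int f\diff \delta_x=\Lambda(\delta_x)\le\alpha$, so $f\le\alpha$ pointwise. But then, for the offending finitely additive probability measure $\mu$, one has $\Lambda(\mu)=\int f\diff \mu\le\alpha<\beta$, a contradiction. Hence $\overline{\mathcal{P}_{\mathrm{fs}}(X)}=\mathcal{P}_{\mathrm{fa}}(X)$.

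The main obstacle is purely a matter of bookkeeping rather than substance: one must pin down the precise predual with respect to which the weak* topology is being considered (bounded Borel functions in the uncountable setting used in Section~\ref{se:Uncountable}, rather than $C_b(X)$ or $L^\infty(G)$ modulo Haar-null sets, the latter being problematic because Dirac masses at points are not naturally represented there) and check that the dual pairing $\langle f,\nu\rangle=\int f\diff \nu$ is well defined for $f\in B(X)$ and $\nu\in\mathcal{P}_{\mathrm{fa}}(X)$. Once the topology is fixed, the separation-plus-Dirac-evaluation step is essentially automatic, which is why the result is considered folklore.
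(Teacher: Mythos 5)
Your argument is correct. Note, however, that the paper does not prove this statement at all: it records it as folklore and cites \citet[Theorem 4.3]{Mer:ZRB2000}, so there is no internal proof to compare against. Your Hahn--Banach route is the standard way to establish it: view the finitely additive probability measures inside $B(X)^{*}$, where $B(X)$ is the sup-norm closure of simple Borel functions (equivalently, the bounded Borel functions), use strict separation of the point $\mu$ from the weak*-closed convex hull of the finitely supported measures, identify the separating weak*-continuous functional with some $f\in B(X)$ via the predual, and evaluate on Dirac masses to get $f\le\alpha$ pointwise, contradicting $\int f\diff\mu\ge\beta>\alpha$ by positivity and normalization of $\mu$. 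The steps you flag as bookkeeping are indeed the only delicate points, and you resolve them correctly: the pairing must be with bounded Borel functions (not $L^{\infty}$ of the Haar measure, where Dirac masses are not represented), and this choice is the one consistent with how the theorem is used in Section~\ref{se:Uncountable}, where indicators of Borel sets are integrated; since the $\sigma(\mathrm{ba},B(X))$ topology is finer than the one generated by bounded continuous functions, your version also implies density for the coarser pairing. One small remark: the metric on $X$ plays no role in your proof, so you are in fact proving a slightly more general statement than the one quoted, which is harmless.
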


We can now state our general existence result for group games.

\begin{theorem}\label{th:maingeneral}
Let $G$ be an amenable, locally compact, metric group such that left-invariant and right-invariant means coincide. Then the game $\mathcal {G}(P,G,u,\nu)$ admits Nash equilibria which do not depend on $\nu$. 
\end{theorem}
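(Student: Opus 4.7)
The plan is to mirror the countable proof architecture (Lemma~\ref{lem:convex} $\to$ Lemma~\ref{lem:minimalgain} $\to$ Lemma~\ref{lem:product} $\to$ Theorem~\ref{th:variationalprinciple} $\to$ Theorem~\ref{th:main2}), replacing countable sums by integrals against the Haar measure $\mu_H$ and invoking Theorem~\ref{th:generalforlner} in place of its countable version. The assumption that left- and right-invariant means coincide is exactly what was used in the countable proof via Theorem~\ref{th:Paterson}; since that is the only structural role played by the FC-group hypothesis, all subsequent arguments go through once the appropriate analogues are in place.

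First I would redo Lemma~\ref{lem:convex}: the set $\mathcal{I}(G)$ sits in the closed unit ball of the dual of $L^{\infty}(G)$, is convex, and is weak* closed, hence weak* compact by Banach-Alaoglu; so $I(\phi_i) \subset \mathbb{R}$ is a compact interval and $I(\phi_i)^+ := \max I(\phi_i)$ is well defined and attained by some $\bar\lambda_i \in \mathcal{I}(G)$. For Lemma~\ref{lem:minimalgain} I would replace the averaged measure by
\[
\mu_n(A) = \frac{1}{\mu_H(F_n)} \int_{F_n} \mu(A * g) \diff \mu_H(g),
\]
where $(F_n)$ is a F{\o}lner sequence given by Theorem~\ref{th:generalforlner}. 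A weak* subnet limit $\lambda$ of $(\mu_n)$ is right-invariant: for $h \in G$ and $A \in \mathcal{B}$,
\[
|\lambda(A*h) - \lambda(A)| \le \lim_\alpha \frac{\mu_H(F_{c(\alpha)} * (h \triangle F_{c(\alpha)}))}{\mu_H(F_{c(\alpha)})} = 0,
\]
and by the standing hypothesis it is also left-invariant, hence an invariant mean. The lower-bound estimate $\int f \diff \lambda \ge L$ carries over verbatim.

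The heart of the proof is the analogue of Lemma~\ref{lem:product}. The direction $L \ge I(f)^+$ is identical: build $\lambda^{\otimes N}$ on $G^N$ by iterated integration and verify invariance using one coordinate at a time. For $L \le I(f)^+$, given $\bar\sigma \in \mathcal{I}(G^N)$ attaining $L$, I would use Theorem~\ref{th:weakstar} to approximate $\bar\sigma$ in the weak* topology by a net $\sigma_\alpha$ of \emph{finitely supported} probability measures on $G^N$. Finite supportedness restores the countable-case formula
\[
\mu_\alpha(\{x_1\}) = \sum_{(x_2, \dots, x_N)} \sigma_\alpha\bigl(\{(x_1 * (x_2 * \dots * x_N)^{-1}, x_2, \dots, x_N)\}\bigr),
\]
now a finite sum defining a countably additive measure on $G$ via $\mu_\alpha(\cdot) = \sigma_\alpha(A_\cdot)$. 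Taking a weak* subnet limit $\mu$, the identical chain of equalities from the countable proof yields $\int f(g * x_1) \diff \mu(x_1) = L$ for every $g \in G$; Lemma~\ref{lem:minimalgain} then gives $\lambda \in \mathcal{I}(G)$ with $\int f \diff \lambda \ge L$, whence $I(f)^+ \ge L$. With Lemma~\ref{lem:product} in hand, Theorem~\ref{th:variationalprinciple} is proved by the same contradiction argument. For Theorem~\ref{th:maingeneral} itself, let $\bar\lambda_i \in \mathcal{I}(G)$ attain $I(\phi_i)^+$ for each $i$; then in \eqref{eq:ualpha}, for every permutation $\pi$ with $\pi(1) \ne i$, invariance of the $\bar\lambda_j$'s makes the summand independent of whether player $i$ plays $\bar\lambda_i$ or a deviation $\mu_i$, while for $\pi \in \Sigma_i(P)$ the variational principle bounds the $\mu_i$-summand by a value achieved at $\bar\lambda_i$. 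Summing with weights $\nu(\pi)$ yields the equilibrium condition, and the argument is manifestly $\nu$-free at the level of each summand, so the equilibrium is independent of $\nu$.

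The main obstacle is the uncountable version of Lemma~\ref{lem:product}: in the countable case the measure $\mu_\alpha$ was built by summing over fibers of a partition of $G^N$, which a priori makes no sense when $G$ is uncountable. The trick is that Theorem~\ref{th:weakstar} allows one to restrict the construction to finitely supported approximants, for which the fiber-partition argument is again literally finite, and to recover the full $\bar\sigma$ only at the final weak* limit. A secondary technical point is ensuring that the function $(x_1, \dots, x_N) \mapsto \phi_i(x_1 * \dots * x_N)$ belongs to $L^\infty(G^N)$ so that integrals against finitely additive product measures are well defined; this is guaranteed by the standing assumption that each $\phi_i$ is bounded Borel-measurable and integrable against every finitely additive Borel probability, combined with continuity of the group operation.
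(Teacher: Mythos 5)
Your proposal is correct and is essentially the proof the paper has in mind: the paper itself omits the argument, stating only that it "follows the line of the proof of Theorem~\ref{th:main2}, using Theorems~\ref{th:generalforlner} and \ref{th:weakstar}," and you deploy those two ingredients in exactly the intended places (the general F{\o}lner sequence in the analogue of Lemma~\ref{lem:minimalgain}, and weak* density of finitely supported measures to rescue the fiber-partition step of Lemma~\ref{lem:product}). In fact your write-up supplies more detail than the paper does, in particular identifying why Theorem~\ref{th:weakstar} is what makes the second half of Lemma~\ref{lem:product} survive the passage to uncountable $G$.
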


The proof of Theorem~\ref{th:maingeneral} follows the line of the proof of Theorem~\ref{th:main2}, using Theorems~\ref{th:generalforlner} and \ref{th:weakstar} and is therefore omitted. 

The following general version of Paterson's theorem guarantees that the class of groups that satisfy the hypotheses of Theorem~\ref{th:maingeneral} is large.

\begin{theorem}[\citet{Pat:PJM1979}]
An amenable, compactly generated\footnote{A topological group is called \emph{compactly generated} if it is generated by a compact subset; namely, there is a compact subset $K$ of $G$ such that every element $g$ of $G$ can be written in the form $g=k_1*\ldots*k_n$, with $k_i\in K\cup K^{-1}$. For instance, $(\mathbb R,+)$ is compactly generated by the interval $[-1,1]$: every real number $r$ can be written in the form $r= s+\ldots+s$, where $s\in[-1,1]$.}, locally compact group $G$ has the property that right-invariant and left-invariant means coincide if and only if the closure of each conjugacy class is compact.
\end{theorem}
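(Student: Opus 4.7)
The plan is to prove both implications separately, adapting the averaging argument used for countable FC-groups (Theorem~\ref{th:Paterson}) to the topological setting, and then, for the converse, using the Følner characterization of amenability provided by Theorem~\ref{th:generalforlner}.

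For the forward direction, suppose that for every $g\in G$ the set $K_{g}:=\overline{\{h*g*h^{-1} : h\in G\}}$ is compact. Let $\lambda$ be a left-invariant mean; I aim to show that $\lambda(R_{g}f)=\lambda(f)$ for every $f\in L^{\infty}(G)$ and every $g\in G$. Since $K_{g}$ is compact, it carries a unique normalized Borel measure $m_{g}$ which is invariant under the conjugation action of $G$ on $K_{g}$ (this invariant probability measure plays the role that the uniform counting measure on the finite conjugacy class plays in the FC case). I would then define
\[
F(x) := \int_{K_{g}} f(x*y) \, \difff m_{g}(y),
\]
and show that, on the one hand, $\lambda(F)=\lambda(R_{g}f)$ by using left-invariance of $\lambda$ together with the conjugation-invariance of $m_{g}$ to ``move'' the conjugate variable $y=h^{-1}*g*h$ across $f$, and, on the other hand, $\lambda(F)=\lambda(f)$, again by left-invariance applied inside the integral (the value of $\lambda$ on any translate of $f$ equals $\lambda(f)$, and the compactness of $K_{g}$ allows exchange of $\lambda$ with the Haar integral via dominated-convergence-type arguments on $L^{\infty}(G)$). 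This would show that $\lambda$ is also right-invariant, so $\mathcal{L}(G)\subseteq\mathcal{R}(G)$; the symmetric argument gives the reverse inclusion.

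For the converse, suppose some $g_{0}\in G$ has a conjugacy class with non-compact closure; I would construct a left-invariant mean that is not right-invariant. Using Theorem~\ref{th:generalforlner}, fix a left-F{\o}lner sequence $(F_{n})$ of compact sets. The idea is to exploit the non-compactness of the conjugacy class to distort $(F_{n})$ asymmetrically: choose a sequence of conjugates $g_{n}=h_{n}*g_{0}*h_{n}^{-1}$ escaping every compact set, and modify $F_{n}$ (using the compact generating set to control the distortion uniformly) to obtain a new sequence $(F_{n}')$ that remains left-F{\o}lner but whose right translates by $g_{0}$ drift into regions where the normalized indicators behave differently. Passing to a weak* limit of $|F_{n}' \cap \cdot|/|F_{n}'|$ (in the Haar-measure sense) would then yield a left-invariant mean whose value on some Borel set detecting this drift differs from its value on a right-translate, contradicting right-invariance.

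The main obstacle is the converse direction: the forward implication is essentially an averaging argument made rigorous by compactness of $K_{g}$ and the existence of Haar measure. The converse requires a concrete construction that preserves left-amenability (which is given) while breaking right-invariance, and this is where compact generation is indispensable, since it allows one to verify F{\o}lner asymptotics on a fixed compact symmetric generating set and then transport the asymmetry to the whole group. The delicate point is to ensure that the modification of the F{\o}lner sets is small enough not to destroy the left-F{\o}lner condition, yet large enough that the ``direction'' provided by the non-compact conjugacy class survives in the weak* limit.
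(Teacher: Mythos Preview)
The paper does not prove this theorem; it is quoted from \citet{Pat:PJM1979} solely to certify that the hypotheses of Theorem~\ref{th:maingeneral} are satisfied by a large class of groups. So there is no ``paper's own proof'' to compare with, and your write-up would be the only proof in the text.

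On the substance of your sketch, the forward direction contains a genuine circularity. You define $F(x)=\int_{K_{g}} f(x*y)\,\difff m_{g}(y)$ and then claim $\lambda(F)=\lambda(f)$ ``by left-invariance applied inside the integral.'' But $x\mapsto f(x*y)$ is the \emph{right} translate $R_{y}f$, not a left translate, so what you need here is precisely $\lambda(R_{y}f)=\lambda(f)$, which is the conclusion you are trying to establish. The compactness of $K_{g}$ and the existence of an invariant probability $m_{g}$ (which, incidentally, need not be unique) do not help you escape this loop. The correct mechanism is different: one exploits the identity $f(x*g)=f((x*g*x^{-1})*x)=(L_{x*g*x^{-1}}f)(x)$, so that $R_{g}f$ is pointwise a left translate of $f$ by an element ranging over the precompact set $K_{g}$; the argument then hinges on controlling this family of left translates uniformly, which is where compact generation and precompactness of conjugacy classes actually enter.

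Your converse sketch is too schematic to assess. ``Modify $F_{n}$ to break right-invariance while keeping it left-F{\o}lner'' is the right slogan, but you have not said what the modification is, nor identified the test function on which the limit mean fails to be right-invariant, nor explained how compact generation controls the F{\o}lner asymptotics after modification. As written, this is a statement of intent rather than a proof.
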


Wald's game can also be played on $\mathbb{R}$ and all games in Subsection~\ref{suse:Q01} can also be played on $[0,1] \mod 1$. 
Theorem~\ref{th:maingeneral} guarantees existence of equilibria in this uncountable setting. All results go through with the suitable needed modifications.

%
%
%
%

\section{Conclusions}\label{se:Conclusions}

We have considered a class of games where the strategy set of each player is a group and the payoff functions depend on the strategies only through the group operation. We have shown that finitely additive equilibria exist for this class of games. In the case of countable groups we have not used any topological conditions, just the algebraic structure of the payoffs. The only measure theoretical assumption refers to the selection of the product of finitely additive mixed strategies. 

Even if the algebraic condition that we use leaves out a huge set of games, it includes cases that are not covered by 
\citet{Mar:IJGT1997}. In fact in general the payoff functions that we considered are not measurable with respect to the algebra generated by the cylinders. Indeed, his payoff  functions satisfy Fubini's theorem \citep[Proposition~3]{Mar:IJGT1997}, whereas ours in general do not. For instance take the two-person zero-sum game on $(\mathbb{Z}, +)$ where $u_{1}(x,y)=\mathds{1}_{\mathbb{N}}(x+y)$. This function is clearly not measurable with respect to the algebra generated by the cylinders, although it is measurable with respect to the $\sigma$-algebra generated by the cylinders. Measurability assumptions on the payoff functions are crucial to prove Marinacci's theorems, whereas we based our proofs on the algebraic properties of the payoffs.

As \citep[Example~2.1]{Sti:GEB2005} shows, the games that we considered in general are not nearly compact and continuous as the ones in \citet{HarStiZam:GEB2005}.

\citet{Sti:GEB2005}
proves very general deep existence results, that are typically non-constructive. In our paper we characterize the equilibrium strategies in a simple form. 

\citet{CapMor:IJGT2012} prove that invariant means are minmax strategies for zero-sum two-person games when the set of allowed strategies is restricted so that the exchange of the order of integration is possible. In our paper we do not put any restriction on the set of mixed strategies. Our results are not only more general, they also requires different tools for their proof.

\citet{CanMenOzdPar:MOR2011} show that any finite game can be decomposed into three components, a potential, a nonstrategic, and a harmonic component. The last component has the property that a profile of uniform strategies is always an equilibrium. We notice that, even if group games share the same property, they are neither a subclass nor a superclass of the class of harmonic games. For instance, if $\phi_{1}= \dots = \phi_{N}$ the group game is a potential game and therefore cannot be harmonic. On the other hand in a harmonic game the strategy sets for the different players are not necessarily the same.

We note that the class of group games is a subspace in the class of all games and that the \citet{CanMenOzdPar:MOR2011} decomposition holds in this subspace by projection.

\subsection*{Acknowledgments} 
The authors thank Marco Dall'Aglio for sparking their interest in group games, Patrizia Berti and Pietro Rigo for their useful comments about finitely additive probability measures, Alain Valette for helpful discussions about amenability, Kent Morrison for pointing out a mistake in a proof, and Ozan Candogan for his insights on harmonic games.

\bibliographystyle{artbibst}
\bibliography{bibsumgame}

\end{document}